\algnewcommand{\StateXfirst}{\Statex \hskip\ALG@tlm} 
\algnewcommand{\StateX}{\Statex \hskip\ALG@tlm}
\algnewcommand{\StateXmulti}[1]{\Statex \hskip\ALG@tlm%
  \parbox[t]{\dimexpr\linewidth-\ALG@thistlm}{\strut#1\strut}}
\algrenewcommand\algorithmicwhile{\textbf{While}}
\algrenewcommand\algorithmicif{\textbf{If}}
\let\oldState\State
\newcommand*{\stopnumbering}{%
  \let\olditem\item
  \renewcommand{\item}[1][]{\olditem[]}%
  \let\State\Statex}
\newcommand*{\resumenumbering}{%
  \let\item\olditem
  \let\State\oldState}
\algrenewcommand\alglinenumber[1]{\footnotesize #1:}
\newcommand\N{\mathbb{N}}
\newcommand*\di{\mathop{}\!\mathrm{d}}
\DeclareMathOperator{\Exp}{Exp} 
\renewcommand\epsilon{\varepsilon}
\renewcommand\theta{\vartheta}
\newcommand*{\num}[1]{{#1}}
\newcommand*{\nS}{\num{S}}
\newcommand*{\nI}{\num{I}}
\newcommand*{\nR}{\num{R}}
\newcommand*{\nD}{\num{D}}
\definecolor{ballblue}{rgb}{0.1, 0.67, 0.7}
\definecolor{tabblue}{HTML}{1f77b4}
\definecolor{taborange}{HTML}{ff7f0e}
\definecolor{tabgreen}{HTML}{2ca02c}
\definecolor{tabred}{HTML}{d62728}
\definecolor{tabpurple}{HTML}{9467bd}
\definecolor{tabbrown}{HTML}{8c564b}
\definecolor{tabpink}{HTML}{e377c2}
\definecolor{tabgray}{HTML}{7f7f7f}
\definecolor{tabolive}{HTML}{bcbd22}
\definecolor{tabcyan}{HTML}{17becf}
\theoremstyle{thmstyleone}%
\newtheorem{theorem}{Theorem}
\newtheorem{lemma}[theorem]{Lemma}%
\theoremstyle{thmstyletwo}%
\newtheorem{remark}{Remark}%
\theoremstyle{thmstylethree}%
\newtheorem{assumption}{Assumption}%
\begin{document}

\title[Generalised Gillespie Algorithms]{Generalised Gillespie Algorithms for Simulations in a Rule-Based Epidemiological Model Framework}


\author[1]{\fnm{David} \sur{Alonso}}\email{dalonso@ceab.csic.es}
\equalcont{These authors contributed equally to this work.}

\author[2]{\fnm{Steffen} \sur{Bauer}}\email{steffen.bauer@stud.uni-heidelberg.de}
\equalcont{These authors contributed equally to this work.}

\author*[3]{\fnm{Markus} \sur{Kirkilionis}}\email{mak@maths.warwick.ac.uk}
\equalcont{These authors contributed equally to this work.}

\author[4]{\fnm{\\Lisa Maria} \sur{Kreusser}}\email{lmk54@bath.ac.uk}
\equalcont{These authors contributed equally to this work.}

\author[5]{\fnm{Luca} \sur{Sbano}}\email{luca.sbano@posta.istrizione.it}
\equalcont{These authors contributed equally to this work.}


\affil[1]{\orgdiv{Theoretical and Computational Ecology}, \orgname{Center for Advanced Studies of Blanes (CEAB-CSIC), Spanish Council for Scientific Research}, \orgaddress{\street{Acces Cala St. Francesc 14}, \city{Blanes}, \postcode{E-17300}, \country{Spain}}}

\affil[2]{\orgdiv{Mathematisches Institut, Mathematikon}, \orgname{Ruprecht-Karls-Universit\"at Heidelberg}, \orgaddress{\street{Im Neuenheimer Feld 205}, \postcode{69120} \city{Heidelberg}, \country{Germany}}}

\affil*[3]{\orgdiv{Warwick Mathematics Institute}, \orgname{University of Warwick}, \orgaddress{\street{Zeeman Building}, \city{Coventry}, \postcode{CV4 7AL}, \country{United Kingdom}}}

\affil[4]{\orgdiv{Department of Mathematical Sciences}, \orgname{University of Bath}, \orgaddress{\city{Bath}, \postcode{BA2 7AY}, \country{United Kingdom}}}

\affil[5]{\orgname{Liceo “Vittoria Colonna”}, \orgaddress{\street{Via dell’Arco del Monte 99}, \city{Rome}, \postcode{00186}, \country{Italy}}}


\abstract{Rule-based  models have been successfully used to represent different aspects of the COVID-19 pandemic, including age, testing, hospitalisation, lockdowns, immunity, infectivity, behaviour, mobility and vaccination of individuals. These rule-based approaches are motivated by chemical reaction rules which are traditionally solved numerically with the standard Gillespie algorithm proposed in the context of molecular dynamics. When applying reaction system type of approaches to epidemiology, generalisations of the Gillespie algorithm are required due to the time-dependency of the problems. In this article, we present different generalisations of the standard Gillespie algorithm which address discrete subtypes (e.g., incorporating the age structure of the population), time-discrete updates (e.g., incorporating daily imposed change of rates for lockdowns) and deterministic delays (e.g., given waiting time until a specific change in types such as release from isolation occurs). These algorithms are complemented by relevant examples in the context of the COVID-19 pandemic and numerical results.}

\keywords{Gillespie algorithm, rule-based models, COVID-19.}



\maketitle

\section{Introduction}
\label{sec:intro}

This paper discusses generalisations of the Gillespie algorithm, an algorithm which was originally designed for chemical reaction systems. The generalisations are triggered by the wish to apply a reaction system type of approach to epidemiology, a choice that is often done, even unconsciously, due to the simple analogy of the infection process with a chemical reaction. However, the important reason why generalisations of the Gillespie algorithm are necessary is already deeply rooted in the history of mathematical modelling, where time-dependent problems are, in a majority of cases, modelled by ordinary differential equations (ODE), and in that class of models, by so-called mass-action systems. Roughly speaking, a mass-action system can be characterised by the rate of change of concentrations of an entity being proportional to the concentration of entities needed to assemble or change it during collision events. Using mass-action systems in modelling is very helpful, as the whole arsenal of methods developed for dynamical systems can be applied. In addition, the right-hand side of a mass-action based ODE is polynomial, allowing the additional use of specific mathematical theories, like algebraic geometry. However, there are a few assumptions needed for mass-action systems to work:

\begin{description}
\item[(i)] The system is deterministic, which is usually the case when the colliding entities are present in very large numbers. In addition, in the dynamical systems sense, the system is autonomous.
\item[(ii)] The collision events between entities are such that every entity can collide with any other entity in the system, and has moreover exactly the same probability to collide with another entity in the system.
\item[(iii)] The system is confined to a well-defined (thermodynamically) closed space, in reaction systems called the reaction chamber. This also constitutes a well-defined system boundary, where 'inside' the system and 'outside' the system are well-defined concepts. Often such systems are also labeled as 'closed' and 'open' if either no disturbances from outside the system boundaries are considered in a closed system, or a allowed in an open system.
\item[(iv)] By the collision assumption, system changes are driven by a sequence of events ordered by time. Moreover, system changes are instantaneous, and only depend on the current state of the system when the event happens. This induces the so-called Markov property.
\end{description}

The original Gillespie algorithm is constructed to generalise assumption $(i)$, deterministic dynamics. The idea is to relax the idea of very large number of entities in the system, and therefore to allow for so-called small-number effects. This transforms the ODE mass-action system into a stochastic process. We will follow this argument and assume assumption $(i)$ is not necessarily valid, as we believe, especially for epidemiological processes, we need to study the consequences of stochastic effects. However, in this article, we keep assumptions $(ii)$ and $(iii)$, as these assumptions are also fundamental in order to apply any Gillespie-type algorithm. However, we will need to relax assumption $(iv)$, because epidemiology (but also other applications) provides clearly model cases where an event triggers a system change only with a delay. Vaccination is such an example. The vaccination event is instantaneous, however the establishment if immunity, for example, will only happen after some time has passed. The algorithms presented in this paper should be able to enable rule-based models to be used in situations where simple reaction mechanisms fail. But what are rule-based systems?

\subsection{Rule-based systems}
\label{sec:intro:subsec:rules}

Rule-based systems are a generalisation of reaction systems, mostly in order to carry over the theory to other fields of applications. Following the notation in \cite{rbse} for the rule-based modelling framework, we denote by $\mathcal{T}=\{T_1,\ldots, T_s\}$, $s \in \mathbb{N}$, the set of all possible types or individual characteristics, $s$  different types in total. 
 Let $\mathcal{R}=\{R_1,\ldots, R_r\}$, $r \in \mathbb{N}$,  be the finite set of rules constituting the epidemiological system.  Each rule $R_j$ takes the form
\begin{eqnarray}
\label{eq:sec:model:subsec:reaction:reactionscheme}
\sum_{i=1}^{s} \alpha_{ij}  T_i  \xrightarrow{k_j} \sum_{i=1}^{s} \beta_{ij}  T_i,
\end{eqnarray}
for $j=1,\ldots,r$, where $k_j$ denotes the rate constant of reaction $R_j$. 
The coefficients $\alpha_{ij} \in \mathbb{N}_0$ are the stoichiometric coefficients of types $T_i$, $1 \le i \le s$, of the {\bf source} side, and  $\beta_{ij}\in \mathbb{N}_0$ are the stoichiometric coefficients of types $T_i$ of the {\bf target} side of  transition or reaction $R_j$. Note that we use the terms \emph{reaction} and \emph{rule of interaction} interchangeably for $R_j$.

The rule-based formulation \eqref{eq:sec:model:subsec:reaction:reactionscheme} of reactions can be  interpreted   stochastically or deterministically. While the deterministic formulation is based on a system of ODEs, several stochastic formulations exist \cite{rbse}, including continuous-time Markov chains, Kolmogorov Differential Equations and Master Equations, as well as on Stochastic Differential Equations. Here we focus on stochastic dynamics based on  continuous-time Markov chains in this paper. We  introduce the population size $N$ and consider the size $n_i(t)$ of type $T_i$ at time $t\geq 0$ for $i=1,\ldots,s$ as discrete random variables with values in $\{0,\ldots,N\}$. For ease of notation, we also write $n=(n_1,\ldots,n_s)$, which accurately characterizes the system configuration (or state of the system) at any given time. Further formulations include Kolmogorov Differential Equations and Master Equations, as well as on Stochastic Differential Equations, see \cite{rbse}.

The time-dependency of  $n$ motivates to regard $n$ as a stochastic process and under some additional assumptions, $n$ is a continuous-time Markov chain. To simulate the sample paths of $n$, one can apply the Gillespie algorithm. The simulation of many trajectories of the system  allows the computation of the statistics of the evolution. While it is known that polynomial regression asymptotically converges to the numerical solution as the number of considered trajectories increases, there exist realisations so that the stochastic trajectories and the deterministic dynamics differ significantly. We believe we need this multi-sampling approach in order to make better predictions on epidemic scenarios when compared with deterministic models.

\subsection{The pandemic as a modelling challenge}
\label{sec:intro:subsec:pandemic}

The Coronavirus disease 2019 (COVID-19) pandemic has put epidemic modelling in the worldwide spotlight. However, public policy making and forecasting the spread of COVID-19  remains a challenging task \cite{Manrubia2020,Castro26190}. This is especially true because the huge impact of the pandemic had many previously unknown or unconsidered consequences that came to the spotlight and which need to be reflected in models in order to make them predictive. Moreover the evolution and unfolding of the pandemic triggered new policies and interventions which need unconventional mathematical modelling techniques, including novel simulation techniques. This latter need essentially triggered the writing of this paper.  

In terms of mathematical models supporting policy decisions a lot of trade-offs have been registered during the Covid-19 pandemic. The most obvious one is the link of a pandemic to economy, where of course the pandemic policy measures have created huge impacts on economic performance, which in turn had tremendous impact on the spread of the disease. An example of an attempt to model such feedback for the Philippines is given in \cite{Lara-Tuprio:2022aa}, but this model is based on a pure differential equation approach, as most models in this area. Of course the pandemic has also positive economic aspects, for example in terms of air pollution \cite{Liu:2022aa}. If we like to give good estimates of such models in terms of stochastic modelling, algorithm like the one developed in this paper will be needed.

There was a multitude of lockdown measures implemented during the Covid-19 pandemic, and it is still an open question which one worked best, and which one could be further improved in their effectiveness. The most severe intervention was surely a country-wide lockdown with a simultaneous closing of national borders. These measures prove effective \cite{Wang:2022aa}, but costly in economic terms \cite{Shan:2021aa}. Then there are local lockdowns, for example automatically triggered by infection numbers, as based on testing \cite{Wells:2022aa}. Isolation measures were also implemented on an individual level, after a person had been detected to be infected. Here the question is how effective the search for infected individuals must be (contact tracing \cite{Aleta:2020aa}, how long such an isolation should last, and how much impact isolations have on economy and society \cite{Eilersen:2020aa}? More mild measures are to impose protective measures, like face-mask wearing, or to avoid close contact \cite{Block:2020aa}. To study such effects, and eventually to optimise the effectiveness of measures one needs to use the stochastic algorithms developed in this paper, which can be applied to such non-autonomous stochastic systems describing interventions.

Many such mathematical modelling problems involve some form of delay, constituting also a basic form of memory. This can perhaps be best explained in terms of vaccination strategies. Many vaccination strategies involve two of more doses of vaccine injections after some fixed time delay, which improves the effect of the vaccine drastically \cite{Hall:2022aa}, sometimes better than immunity by infection \cite{Waxman:2022aa}. Immunity itself also is never instantaneously achieved after an infection or vaccination event, but build up and declines over a longer period of time. If such effects or policies need to be incorporated into a stochastic collision-based stochastic model, then again the algorithms in this paper need to be used.

\subsection{Gillespie algorithm}
\label{sec:intro:subsec:gillespie}

In the context of molecular dynamics, Gillespie  showed in a series of seminal papers \cite{Gillespie1976,Gillespie1992,Gillespie2000} that, if the reaction rates are proportional to the product of the number of particles, i.e.\ if they follow the \emph{mass action} kinetics, then the evolution processes are Poisson processes and can be reconstructed through an algorithm. Note that \emph{mass action} is equivalent to random \emph{close contacts}. The Gillespie algorithm is characterised by observing that Poisson processes have inter-event times (\emph{waiting time}) which are exponentially distributed and satisfy the Markov property. This is equivalent to  increments being independent  and the time evolution having no memory.  The spread of an infection requires to analyse effects and interactions that occur at  different scales in time, space and size, which is a  challenging problem for modellers. Such effects can be more accurately modelled  through memory effects, and therefore requiring the introduction of non-Markov process with non-exponentially distributed waiting-times. The necessity of a generalisation of the Gillespie algorithm was considered already by Gillespie himself in \cite{Gillespie_ME_with_waiting_times}.

The main directions to generalise the Gillespie algorithm is to find ways to efficiently simulate systems with a large number of degrees of freedom \cite{Gibson2000,Bernstein2005}, and simulate stochastic processes with given waiting time distributions not necessarily exponentially distributed. These problems have  recently attracted lots of interest, including generalisations of the Gillespie algorithm for non-Markov processes  \cite{Masuda,Bog,CCTRW}, based on considering waiting times which are not necessarily exponentially distributed.  A recent general review can be found in \cite{Masuda2}. 

In this work, we propose extensions of the Gillespie algorithm for a rule-based modelling framework \cite{rbse} in the context of epidemiology.  
In these models, individuals are represented by discrete entities, and contagion processes are also discrete. Due to the complexities of these models, standard algorithms cannot be immediately applied which requires the  development of new algorithms for computing the model solutions numerically. Motivated by the approach in \cite{CCTRW}, we introduce several generalisations of the Gillespie algorithm in this paper.




\subsection{Outline}

This paper is organised as follows. Considering the general rule-based framework for  epidemiological models \cite{rbse}, we present different generalisations of the standard Gillespie algorithm in Section~\ref{sec:gillespie}, which  address  discrete subtypes (e.g., incorporating the age structure of the population), time-discrete updates (e.g., incorporating daily imposed change of rates for lockdowns) and deterministic delays (e.g., given waiting time until a specific change in types such as release from isolation occurs). 
Finally, in Section~\ref{sec:simulations}, we apply the different algorithms to several relevant examples and show numerical results. The examples also present some possible answers to the challenges outlined in section \ref{sec:intro:subsec:pandemic}.

\section{Gillespie algorithm and its generalisation}
\label{sec:gillespie}

In this section we describe the main algorithms used to study the stochastic evolution of rule-based models of the form \eqref{eq:sec:model:subsec:reaction:reactionscheme}, i.e., how the stochastic process makes $n$ evolves in time. The evolution arises due to the interactions among types $\mathcal{T}$ and, in the language of reaction schemes, can be expressed  as a reaction network. 

For defining the transition probabilities associated with the stochastic process we assume that the process is time-homogeneous, i.e., the transition probabilities depend on the time between events $\Delta t$, but not on the specific time $t$. In addition, we suppose that the Markov property is satisfied, that is, the probability distribution of a future state of the stochastic process at time $t+\Delta t$ only depends on the current state at time $t$, but not on any state prior to $t$. Given the system is in state $n$, we introduce the propensities $\kappa_j(n)$ for $j=1,\ldots,r$. Here, the propensity $\kappa_j(n)$ of the $j$th reaction $R_j$ is the probability that the $j$th reaction occurs within an infinitesimal time interval and is given by

\begin{align}\label{eq:propensity}
 \kappa_j(n) = k_j  \prod_{i=1}^s   \begin{pmatrix} n_i \\ \alpha_{ij} \end{pmatrix},
\end{align}
where $k_j$ denotes the rate constant of reaction $R_j$ and the combinatorial factor reflects the number of ways in which reaction $R_j$ may happen. Based on the assumption that at most one reaction can occur within an infinitesimal time interval $\Delta t$, the transition rates for the Markov process, given that the system is in state $n$, can then be approximated as $\kappa_j(n)\Delta t$ for reaction $R_j$ to occur within time interval $\Delta t$. 
As in the original paper \cite{Gillespie1976} by Gillespie, we assume in the following:

\begin{assumption}
\label{Gillespie_assumption}
Reaction $R_j$ occurs within a time interval $\Delta t$ with probability $\kappa_j(n)\Delta t$.
\end{assumption}

For a given population with system configuration $n$, one can  show that the probability density associated with the occurrence of $R_j$ is given by
%
\begin{equation}
\Phi_j(n,t)=\kappa_j(n)\exp\left(-\sum_{l=1}^r\kappa_l(n)\,t\right)
\label{eq:gillespie_prob_Pi}
\end{equation}
%
as a function of time $t$. The probability that reaction $R_j$ ever occurs is given by 
%
\begin{equation}
\rho_j(n)=\int_0^\infty dt'\,\Phi_j(n,t')=\frac{\kappa_j(n)}{\sum_{l=1}^r \kappa_l(n)}
\label{eq:rho_i_standardgil}
\end{equation}
%
for  $j\in\{1,...,r\}$.

The reactions in  \eqref{eq:sec:model:subsec:reaction:reactionscheme} generate a stochastic process which satisfies the Markov property due to Assumption~\ref{Gillespie_assumption}. This process is  the starting point for the derivation of the master equation (ME), describing the time evolution of the probability distribution of the system as a whole
(see \cite{Gardiner}). The probability density in \eqref{eq:gillespie_prob_Pi} can  be interpreted  in terms of reaction waiting times and, in fact, for
Assumption~\ref{Gillespie_assumption} we can say equivalently that  reaction $R_j$ occurs with an \emph{exponentially distributed} reaction waiting time.

In \cite{Gillespie1976} Gillespie showed that the ME can be exactly solved through the construction of realisations of the stochastic process associated to \eqref{eq:sec:model:subsec:reaction:reactionscheme}.    This is achieved through Algorithm \ref{alg:sec:model:subsec:gillespie}, usually called Direct Method, where $::=$ denotes replacement.

The random numbers generated in Algorithm~\ref{alg:sec:model:subsec:gillespie} are used for determining the index $i$ of the next reaction to occur and the time interval $\tau$ for reaction $R_i$ to occur.
By the definition of $\rho_l(n)$ in \eqref{eq:rho_i_standardgil} we have
$$\frac{1}{\Phi} \sum_{l=1}^i \kappa_l(n)=\sum_{l=1}^i \int_0^\infty dt'\,\Phi_l(n,t')=\sum_{l=1}^{i}\rho_l (n)$$ in step~\ref{alg:line:GillespieFour} in Algorithm~\ref{alg:sec:model:subsec:gillespie}.

We use the Gillespie algorithm to simulate the model dynamics, based on the assumption that the processes have exponentially distributed waiting times or, equivalently, that they are all Poisson processes. such as atomic or molecular mixtures of gases or liquids. In complex systems formed by types which are heterogeneous and evolve through diverse interactions that cannot  always be characterised by \emph{exponentially distributed waiting times}, the Gillespie approach is not fully adequate. 
Several generalisations for arbitrary distributions exist in the literature, see for instance \cite{Gillespie_ME_with_waiting_times,Masuda,Bog,CCTRW}.

\begin{algorithm}[htb] 
\caption{Gillespie algorithm}
\label{alg:sec:model:subsec:gillespie}
\begin{algorithmic}[1] \setcounter{ALG@line}{-1}
\State Initialize the starting time $t$, the system configuration $n=(n_1,\ldots,n_s)$ and the rate constants $k_j$ for $j=1,\ldots,r$. 
\State Generate two random numbers $r_1, r_2$ uniformly distributed in $[0,1]$. \label{alg:line:GillespieFirst}
\State Set $\Phi:= \sum_{l=1}^r \kappa_l (n)$ and compute $\tau:= -\frac{1}{\Phi} \ln r_1$. \label{alg:line:GillespieTwo}
\State Set $t::=t+\tau$ as the time of the next rule execution. \label{alg:line:GillespieThree}
\State Determine the reaction $R_i$ which is executed at time $t$ by finding $i$ such that \label{alg:line:GillespieFour}
\begin{align}
 \label{eq:conditiongillespie}
 \frac{1}{\Phi} \sum_{l=1}^{i-1}\kappa_l (n)<   r_2 \leq \frac{1}{\Phi} \sum_{l=1}^{i}\kappa_l (n).
\end{align}
\State Execute rule $R_i$ and update the new system configuration $n$. \label{alg:line:GillespieFive}
\State \textbf{If} $t<T$ go to step~\ref{alg:line:GillespieFirst}, \textbf{otherwise} stop.
\end{algorithmic}
\end{algorithm}

\begin{remark}[Type-dependent reaction rates]\label{rem:model:subsec:gillespie:subsub:typedependentrates}
Type-dependent reaction rates are an important way for  modelling the influence of different types on each other in the context of epidemics and it is of interest to include this aspect in the modeling framework. For example, the behaviour of individuals plays a crucial role and people's behaviour is often influenced by the behaviour of people around them. One way to include people's behaviour is to split the susceptible individuals, denoted by type $S$, into two subtypes: \emph{wise} ($W$) and \emph{risky} ($R$) individuals. While individuals of type $W$ generally abide by the  recommendations and regulations communicated by health authorities, individuals of type $R$ are not worried by the situation and, in general, want to carry on with their usual habits, see  \cite{rbse} for a detailed description of the model. To describe the reaction $W\xrightarrow{k_W} R$ with reaction constant $k_W$, we may assume that $k_W$ depends on type $I$ of infected individuals -- or additionally on type $D$ of deceased individuals -- and that $k_W$ is a monotonously decreasing function of $I$. Possible examples for $k_W$ are given by 

\begin{align*}
    k_W(I)=\begin{cases}\kappa_W, & I< I_0, \\
    \kappa_R, & I\geq I_0,
    \end{cases}
\end{align*} 
for constants $\kappa_W>\kappa_R>0$ and some $I_0>0$. Alternatively, one may consider 
\begin{equation}
    k_W(I)=\frac{\kappa}{I+1}
    \label{eq:k_W}
\end{equation}
for some positive constant $\kappa$. One can verify that for a rule-based  framework with  type-dependent reaction rates, the standard Gillespie algorithm \ref{alg:sec:model:subsec:gillespie} can be applied.
\end{remark}

\subsection{Gillespie algorithm for discrete subtypes}
\label{sec:model:subsec:gillespie:subsub:subtypes}

In certain modelling scenarios, it may be beneficial to differentiate types further into subtypes. For instance, in the context of age classes, one may subdivide a given population not only in terms of their disease types, but also consider an age structure of individuals in addition. For describing these discrete subtypes, we follow the notation in \cite{rbse}:

\begin{eqnarray}
\label{eq:sec:model:subsec:gillespie:subtypes}
\sum_{i=1}^{s} \gamma_{ij} \sum_{\alpha=1}^{l_i}  \mu_{ij}^\alpha T^\alpha_i  \xrightarrow{k_j} \sum_{i=1}^{s} \delta_{ij} \sum_{\alpha=1}^{l_i}  \nu_{ij}^\alpha T^\alpha_i.
\end{eqnarray}
This is a direct generalisation of the reaction-scheme \eqref{eq:sec:model:subsec:reaction:reactionscheme}. We denote by $l_i\in \N$  the number of subtypes of type $i$ and write $T_i^{\alpha_i}$ for $1\leq \alpha_i\leq l_i$ for the subtypes of type $i$. This notation is generalised to more than one subtype by multi-indices $\xi \in \Xi = \mathbb{N}^{b_1 \times \ldots \times b_u}$, where $u \in \mathbb{N}$ is the total number of different subtypes for all types.

The use of subtypes requires clarification of the Gillespie Algorithm~\ref{alg:sec:model:subsec:gillespie}. This motivates to keep the original algorithm and add a tensor like structure for every subtype dependent quantity, like propensities or parameters, resulting in Algorithm~\ref{alg:sec:model:subsec:gillespie:subsub:subtypes}.

\begin{algorithm}[htb] 
\caption{Gillespie algorithm for discrete subtypes}
\label{alg:sec:model:subsec:gillespie:subsub:subtypes}
\begin{algorithmic}[1] \setcounter{ALG@line}{-1}
\State Initialize the starting time $t$, the system configuration $n=(n^\Xi_1,\ldots,n^\Xi_s)$ and the rate constants $k^\Xi_j$ for $j=1,\ldots,r$. 
\State Generate two random numbers $r_1, r_2$ uniformly distributed in $[0,1]$.\label{alg:line:GillespieSUBFirst}
\State Compute propensities $\kappa^\xi_l (n)$, $\xi \in \Xi$, set $\Phi:= \sum_{l=1}^r \sum_{\xi \in \Xi} \kappa^\xi_l (n)$ and compute $\tau:= -\frac{1}{\Phi} \ln r_1$.
\State Set $t::=t+\tau$ as the time of the next rule execution.
\State Determine the rule $R^{\Xi_z}_i$ which is executed at time $t$ by finding first $i$ such that
\begin{align*}
     \frac{1}{\Phi} \sum_{l=1}^{i-1} \sum_{\xi \in \Xi} \kappa^{\xi}_l (n)<   &r_2 \leq \frac{1}{\Phi} \sum_{l=1}^{i} \sum_{\xi \in \Xi} \kappa^{\xi}_l (n),\\
     \intertext{and then $z$ such that}
     \frac{1}{\Phi} \sum_{l=1}^{i-1} \sum_{q=1}^{z-1} \kappa^{\Xi_q}_l (n)<   &r_2 \leq \frac{1}{\Phi} \sum_{l=1}^{i-1} \sum_{q=1}^{z} \kappa^{\Xi_q}_l (n).
\end{align*}
\State Execute rule $R^{\Xi_z}_i$ and update the new system configuration $n$.
\State \textbf{If} $t<T$ go to step~\ref{alg:line:GillespieSUBFirst}, \textbf{otherwise} stop.
\end{algorithmic}
\end{algorithm}

Only considering discrete subtypes we assume that the set $\Xi$ is ordered, e.g.\ lexicographically, and by $\Xi_q$ we mean the $q^\text{th}$ multi-index.
The first uniformly distributed random number determines the next rule execution time, by inverse method of sampling from an exponential distribution with intensity

\begin{equation}\label{eq:sec:model:gillespie:subsub:subtypePhi}
    \Phi= \sum_{l=1}^r \sum_{\xi \in \Xi} \kappa^\xi_l (n).
\end{equation}

A simple approach to solve the type of rule that is executed, is to flatten the normalised tensor of all propensities into an one dimensional vector while keeping a consistent ordering, and search for the bin, where the second random number $r_2$ lies in.
For large numbers of subtypes this method is prone to numerical errors and can be replaced by tree structures and iterative search in the hierarchical bins.
The updates of the jump processes are executed by adding the row of the transposed of the expanded stoichiometric matrix $\alpha^\Xi$ corresponding to the determined rule.

\subsection{Gillespie algorithm with time-discrete updates}
\label{sec:model:subsec:gillespie:subsub:discreteupdates}

To prevent the spread of the Corona virus, non-pharmaceutical interventions like lockdowns or wearing masks are often imposed for a given discrete time. These measures require updating the parameters of the simulations.
Also monitoring the COVID pandemic, by, e.g., recording the daily number of new infections. Hence, it is desirable to consider a feedback based on these updates, see Remark~\ref{rem:model:subsec:gillespie:subsub:typedependentrates}. The aim of this section is to adapt the Gillespie algorithm~\ref{alg:sec:model:subsec:gillespie} to incorporate these updates.
 
In every time step of Algorithm~\ref{alg:sec:model:subsec:gillespie}  the waiting time $\tau$ for the execution of the next rule   is computed  by 
 \begin{align*}
    \tau \sum_{l=1}^r \kappa_l(n) = -\ln r_1.
 \end{align*}
Since the propensities are time-dependent now, i.e. $\kappa_l=\kappa_l(n(t),t)$, the origin of the time-scale has to be adapted. Let $t_0$ denote the initial time. For a given time $t\geq t_0$, we replace $\tau \kappa_l(n)$ by $\int_{t}^{t+\tau} \kappa_l(n(t'), t')  dt'$
and we solve
\begin{equation}\label{eq:waitingtimeadapt}
    \sum_{l=1}^r \int_t^{t+\tau} \kappa_l(n(t'), t')  dt'= - \ln r_1
\end{equation}
for $\tau$.
For general time-dependent propensity functions, \eqref{eq:waitingtimeadapt} is difficult to solve for $\tau$. To overcome this, we assume that the state $n$ and propensities $\kappa_l$ are piece-wise constant functions which are constant on time intervals $[m\Delta t,(m+1)\Delta t)$ for any $m\in \N$.
For $t\in [m\Delta t,(m+1)\Delta t)$ we have
\begin{align*}
    n(t)=n(m\Delta t)
\end{align*}
and
we define
\begin{equation*}
   \kappa_l^m=\kappa_l(n(m\Delta t),m\Delta t) =\kappa_l(n(t),t). 
\end{equation*}
This shows that $\kappa_l$ is a right-continuous step function as a function of $t$.

Suppose now that $\tau>0$ is known and that the above assumptions on $n$ and $\kappa_l$ hold. 
Then there exist $m_0, m_\tau \in \N$ with $m_0 \leq m_\tau$ such that $t \in [m_0\Delta t, (m_0 +1)\Delta t)$ and $t +\tau \in [m_\tau \Delta t, (m_\tau +1)\Delta t)$.
We introduce the function $\Phi(t)=\sum_{l=1}^r  \kappa_l(n(t), t)$ and   write \eqref{eq:waitingtimeadapt} as $\int_t^{t+\tau} \Phi(t') dt'= - \ln r_1$.
For $m_0=m_\tau$, the left-hand side of \eqref{eq:waitingtimeadapt} satisfies

\begin{equation*}
     \sum_{l=1}^r \int_t^{t+\tau} \kappa_l(n(t'), t')  dt = \sum_{l=1}^r \tau \kappa_l^{m_0} =\tau \Phi(m_0\Delta t),
\end{equation*}
while for $m_0<m_\tau$, we have
\begin{align*}
     &\sum_{l=1}^r \int_t^{t+\tau} \kappa_l(n(t'), t')  dt\\ &= \sum_{l=1}^r \left(  [(m_0+1)\Delta t - t] \kappa_l^{m_0} + \sum_{m=m_0+1}^{m_\tau-1}  \Delta t \kappa_l^{m} +  [t + \tau - m_\tau \Delta t] \kappa_l^{m_\tau}\right)\\
     &= ((m_0+1)\Delta t - t) \Phi(m_0\Delta t) + \sum_{m=m_0+1}^{m_\tau-1}  \Delta t \Phi(m\Delta t) +  (t + \tau - m_\tau \Delta t) \Phi(m_\tau\Delta t),
\end{align*}
where we used that $\kappa_l$ is a right-continuous step function with values $\kappa_l^m$. 
This allows us to rewrite \eqref{eq:waitingtimeadapt} as

\begin{align}
\label{eq:propensityeq}
   -\ln r_1=\begin{cases} \tau \Phi(m_0\Delta t), & m_\tau=m_0,\\
    ((m_0+1)\Delta t - t) \Phi(m_0\Delta t)\\
    \quad + (t + \tau - m_\tau \Delta t) \Phi(m_\tau\Delta t), & m_\tau = m_0 + 1,\\
   ((m_0+1)\Delta t - t) \Phi(m_0\Delta t)\\
   \quad + \sum_{m=m_0+1}^{m_\tau-1}  \Delta t \Phi(m\Delta t) \\
   \quad +  (t + \tau - m_\tau \Delta t) \Phi(m_\tau\Delta t), & m_\tau > m_0+1.
   \end{cases}
\end{align}

Next, we describe an iterative procedure to determine $\tau$ from \eqref{eq:waitingtimeadapt} under the assumption that $\kappa_l$ is piece-wise constant.
Let $t\geq t_0$ and $m_0\in \mathbb{N}$ be given such that $t\in [m_0\Delta t,(m_0+1)\Delta t)$. We predict $\tau$ as the solution of 
\begin{equation}\label{eq:waitingtimeadapt2}
   \tau  \Phi(m_0\Delta t)= - \ln r_1,
\end{equation}
provided $0<\tau\leq (m_0+1)\Delta t-t$. 
 If $\tau>(m_0+1)\Delta t-t$ holds for $\tau$ satisfying \eqref{eq:waitingtimeadapt2}, we reject the previously computed $\tau$, set  $m_\tau =m_0+1$ and obtain
 
\begin{equation*}
    ((m_0+1)\Delta t-t)  \Phi(m_0\Delta  t)+(t+\tau -(m_0+1)\Delta t)  \Phi((m_0+1)\Delta t)= - \ln r_1,
\end{equation*}
or equivalently, 

\begin{equation*}
    \tau  = - \frac{\ln r_1}{\Phi((m_0+1)\Delta t)}-\frac{((m_0+1)\Delta t-t)  \Phi(m_0\Delta  t)}{\Phi((m_0+1)\Delta t)}+(m_0+1)\Delta t-t,
\end{equation*}
provided $\tau \in((m_0+1)\Delta t-t,(m_0+2)\Delta t-t]$. If $\tau>(m_0+2)\Delta t-t$, 
the required condition can be derived iteratively. The above procedure illustrates the idea for our algorithm. First, we solve \eqref{eq:propensityeq} for $m_\tau=m_0$ which is equivalent to \eqref{eq:waitingtimeadapt2}. While $\tau >(m_\tau+1)\Delta t-t$, we increase $m_\tau$ by 1 and solve \eqref{eq:propensityeq} for $\tau$. 

Suppose now that $\tau$ has been computed by the procedure above. We choose $m_\tau\in \mathbb{N}$ such that $t+\tau\in [m_\tau \Delta t,(m_\tau+1)\Delta t)$.
To determine which rule is applied  we consider the probability that reaction $R_j$ ever occurs  which is denoted by $\rho_j$ as in \eqref{eq:rho_i_standardgil}. 
Below in Lemma~\ref{lem:estimatejustification}, we show that 

\begin{align*}
\rho_j(n(t+\tau),t+\tau)=\frac{\kappa_j(n(t+\tau),t+\tau)}{\sum_{l=1}^r\kappa_l(n(t+\tau),t+\tau)}=\frac{\kappa_j(n(t+\tau),t+\tau)}{\Phi(t+\tau)}    
\end{align*}
for time-dependent states $n$ and $\kappa_l$ for $l=1,\ldots,r$. 
The previous relation states that once the  execution time $\tau$ of the next reaction is determined, the probability of a reaction to be executed at time $t+\tau$ 
depends on the state and the propensities at  time $t+\tau$.
Using \eqref{eq:estimate} from Lemma~\ref{lem:estimatejustification}, we can rewrite condition \eqref{eq:conditiongillespie} 
in the original Gillespie algorithm~\ref{alg:sec:model:subsec:gillespie}:
Given a uniformly distributed random number $r_2\in [0,1]$, we chose $i\in\{1,\ldots,r\}$ such that

\begin{equation*}
    \frac{1}{\Phi(t + \tau)} \sum_{l=1}^{i-1} \kappa_l(n(t + \tau),t + \tau) < r_2 \leq \frac{1}{\Phi(t + \tau)} \sum_{l=1}^{i} \kappa_l(n(t + \tau),t + \tau)
\end{equation*}
is satisfied. To evaluate this condition, note that $ \kappa_l(n(t + \tau), t + \tau)=\kappa_l(n(m_\tau \Delta t), m_\tau \Delta t) $ 
for $t+\tau \in [m_\tau \Delta t,(m_\tau +1)\Delta t)$ due to the assumption on the piece-wise constant propensities and states on intervals $[m\Delta t,(m+1)\Delta t)$ for $m\in\N$. We summarise the computation of $\tau$ and the determination of reaction $R_j$ in Algorithm~\ref{alg:sec:model:subsec:gillespie:subsub:discreteupdates}.
\clearpage
\begin{algorithm}[htb]
\caption{Gillespie algorithm with time-discrete updates}
\label{alg:sec:model:subsec:gillespie:subsub:discreteupdates}
\begin{algorithmic}[1] \setcounter{ALG@line}{-1} 
\State Initialize the starting time $t$, the system configuration $n=(n_1,\ldots,n_s)$ and the rate functions $k_l(t)$ for $l=1,\ldots,r$ as well as the discrete time-step $\Delta t$ and a counter $m_0\in \N$ such that $t\in[m_0\Delta t,(m_0+1)\Delta t)$.
\State Generate two random numbers $r_1, r_2$ uniformly distributed in $[0,1]$. \label{alg:line:GillespieDUrandom}
\algrenewcommand\alglinenumber[1]{\footnotesize #1a:}
\State Set $m_\tau=m_0$, $\Phi(t)= \sum_{l=1}^r \kappa_l (n(t),t)$ and compute  ${\tau =- \frac{\ln r_1}{\Phi(t)}}$. \addtocounter{ALG@line}{-1}
\algrenewcommand\alglinenumber[1]{\footnotesize #1b:}
\While{$t + \tau > (m_\tau+1) \Delta t$}: 
\StateXfirst Set $m_\tau ::= m_\tau+1$ and
    \begin{align*}
    \tau  &= - \frac{\ln r_1}{\Phi(m_r\Delta t)}-\frac{((m_0+1)\Delta t-t)  \Phi(m_0\Delta  t)}{\Phi(m_r\Delta t)}\\
        & \qquad-\sum_{m=m_0+1}^{m_\tau-1}  \Delta t \frac{\Phi(m\Delta t)}{\Phi(m_r\Delta t)}+m_r\Delta t-t.
    \end{align*}
    \EndWhile
\algrenewcommand\alglinenumber[1]{\footnotesize #1:}
\State Set $ t::= t + \tau$ as the time of the next rule execution. \label{alg:line:GillespieDUupdateTime}
\State Determine the rule $R_i$ which is executed at time $t$ by finding $i$ such that
\begin{equation}
     \frac{1}{\Phi(t)} \sum_{l=1}^{i-1} \kappa_l(n(t), t) < r_2 \leq \frac{1}{\Phi(t)} \sum_{l=1}^{i} \kappa_l(n(t), t).
\end{equation}

\State Execute rule $R_i$ and update the new system configuration $n$.
\State \textbf{If} $t<T$ go to step~\ref{alg:line:GillespieDUrandom}, \textbf{otherwise} stop.
\end{algorithmic}
\end{algorithm}
 
Proceeding as in \cite[Appendix A]{Anderson2007}, the following result holds in our setting:

\begin{lemma}
\label{lem:estimatejustification}
Given $t\geq t_0$ and $\tau>0$, the probability  that reaction $R_j$ is executed after waiting time $\tau$ is
\begin{align}
    \rho_j(n(t+\tau),t+\tau)=\frac{\kappa_j(n(t+\tau),t+\tau)}{\sum_{l=1}^r\kappa_l(n(t+\tau),t+\tau)}.
    \label{eq:estimate}
\end{align}
\end{lemma}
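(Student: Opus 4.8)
The plan is to follow the competing-risks (survival-analysis) argument of \cite[Appendix~A]{Anderson2007}, adapted to the piece-wise constant propensities assumed above. The central object is the joint probability density, as a function of $\tau$, of the event that the next reaction to fire is $R_j$ \emph{and} that it fires after waiting time exactly $\tau$, starting from state $n(t)$ at time $t$. Once this density is in hand, \eqref{eq:estimate} follows from a one-line conditioning argument.

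First I would derive this joint density. Partition $[t,t+\tau]$ into $K$ subintervals of length $h=\tau/K$. By Assumption~\ref{Gillespie_assumption}, the probability that no reaction occurs in the subinterval beginning at time $s$ is $1-\Phi(s)\,h+o(h)$, where $\Phi(s)=\sum_{l=1}^r\kappa_l(n(s),s)$, the $o(h)$ term absorbing the possibility of two or more firings. By the Markov property these events on disjoint subintervals are independent (the state stays constant between events), so the probability that no reaction occurs on all of $[t,t+\tau]$ is $\prod_{m=0}^{K-1}\bigl(1-\Phi(t+mh)\,h+o(h)\bigr)$. Letting $K\to\infty$ and using that $\Phi$ is a bounded step function, constant on each grid cell $[m\Delta t,(m+1)\Delta t)$, this product converges to $\exp\!\left(-\int_t^{t+\tau}\Phi(s)\,\di s\right)$; on a single cell this is just the exponential survival factor in \eqref{eq:gillespie_prob_Pi}, and the integral stitches the cell-wise exponentials together exactly as in the derivation of \eqref{eq:propensityeq}. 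Multiplying by the infinitesimal probability $\kappa_j(n(t+\tau),t+\tau)\,\di\tau$ that $R_j$ fires in $[t+\tau,t+\tau+\di\tau]$ — again by Assumption~\ref{Gillespie_assumption}, evaluated with the propensity in force at time $t+\tau$, namely $\kappa_j^{m_\tau}$ — yields the joint density
\begin{equation*}
    p_j(\tau)=\kappa_j(n(t+\tau),t+\tau)\,\exp\!\left(-\int_t^{t+\tau}\Phi(s)\,\di s\right).
\end{equation*}

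Next I would condition on the waiting time. The marginal density of the waiting time is $\sum_{l=1}^r p_l(\tau)=\Phi(t+\tau)\exp\!\left(-\int_t^{t+\tau}\Phi(s)\,\di s\right)$, so the probability that the reaction executed after waiting time $\tau$ is $R_j$ is the ratio
\begin{equation*}
    \rho_j(n(t+\tau),t+\tau)=\frac{p_j(\tau)}{\sum_{l=1}^r p_l(\tau)}=\frac{\kappa_j(n(t+\tau),t+\tau)}{\sum_{l=1}^r\kappa_l(n(t+\tau),t+\tau)},
\end{equation*}
since the common survival factor $\exp\!\left(-\int_t^{t+\tau}\Phi(s)\,\di s\right)$ cancels. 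This is \eqref{eq:estimate}.

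I expect the main obstacle to be the first step: making the passage from the infinitesimal description in Assumption~\ref{Gillespie_assumption} to the closed-form joint density $p_j(\tau)$ rigorous in the time-inhomogeneous setting, in particular justifying both that the survival factor is $\exp(-\int_t^{t+\tau}\Phi)$ and that reaction selection at the firing instant is governed by the propensities in force at time $t+\tau$ rather than at time $t$. The piece-wise constant assumption on $n$ and the $\kappa_l$ makes this tractable: within a single cell $[m\Delta t,(m+1)\Delta t)$ the process is an ordinary time-homogeneous Gillespie process for which \eqref{eq:gillespie_prob_Pi}--\eqref{eq:rho_i_standardgil} apply verbatim, and one only needs to concatenate finitely many cells using the Markov property at the grid points $m\Delta t$ — precisely the bookkeeping already carried out in \eqref{eq:propensityeq}. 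After that, the remainder is the cancellation above.
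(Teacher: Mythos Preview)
Your proposal is correct and follows the same competing-risks idea as the paper, both explicitly patterned on \cite[Appendix~A]{Anderson2007}. The difference is only in packaging: the paper introduces $r$ independent waiting times $\tau_l$ with survival functions $P(\tau_l>\tau)=\exp\bigl(-\int_t^{t+\tau}\kappa_l\,\di t'\bigr)$, writes $\rho_j$ as the conditional probability $P(\tau_j<\tau_l\ \forall l\neq j\mid\min_k\tau_k=\tau)$, expresses this as a $\delta\to 0$ limit of a ratio, and evaluates the resulting $0/0$ form by L'H\^opital's rule; you instead write down the joint density $p_j(\tau)=\kappa_j(t+\tau)\exp\bigl(-\int_t^{t+\tau}\Phi\bigr)$ directly and divide by the marginal $\sum_l p_l(\tau)$. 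These are two presentations of the same computation --- your ratio is exactly what the paper's L'H\^opital step produces --- so neither gains anything substantive over the other. Your extra care in deriving the survival factor from Assumption~\ref{Gillespie_assumption} via a product over subintervals is something the paper simply postulates, so in that respect your write-up is slightly more self-contained.
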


\begin{proof}
We follow the approach in \cite[Appendix A]{Anderson2007}. For this, we introduce $r$ 
 independent exponentially distributed random variables $\tau_l$ for $l=1,\ldots,r$ where $\tau_l$ denotes the waiting time until reaction $R_l$ occurs. We assume that reaction $R_j$ is executed if it has the shortest waiting time among all reaction waiting times $\tau_l,~l=1,\ldots,r$, and in this case it is executed instantaneously. We have 

 \begin{align}\label{eq:proofrem}
 \rho_j(n(t+\tau),t+\tau)=P(\tau_j < \tau_{l}~ \forall l \neq j \vert \min_k\{\tau_k\}=\tau),
\end{align} 
i.e., $\rho_j$ is 
the conditional probability that  reaction $R_j$ is the first reaction to be executed, given that $\tau = \min_l \tau_l$ is the waiting time for the first reaction to be executed. 
The random variables $\tau_l$ for $l\in\{1,\ldots,r\}$ satisfy

\begin{equation*}
P(\tau_l>\tau)=\exp\left(-\int_t^{t+\tau}\kappa_l(n(t'),t')\di t'\right).
\end{equation*}
 Since $\min_l \tau_l>\tau$ is equivalent to $\tau_l>\tau$ for all $l\in\{1,\ldots,r\}$, the independence of $\tau_l$ implies
 
\begin{equation}
\label{eq12}
P(\min_l \tau_l>\tau)=\prod_{l=1}^r P(\tau_l>\tau)=\exp\left(-\int_t^{t+\tau}\sum_{l=1}^r\kappa_l(n(t'),t')\di t'\right).
\end{equation}
The right-hand side of \eqref{eq:proofrem} can be written as an appropriate limiting probability given by

\begin{align*}
P(\tau_j < \tau_{l}\;\forall l\neq j\;\vert\min_k\{\tau_k\}=\tau)&=
\lim_{\delta \rightarrow 0}P(\tau_j < \tau_{l}\;\forall l \neq j\;\vert \min_k\{\tau_k\}\in[\tau,\tau +\delta))\\
&=\lim_{\delta \rightarrow 0}\frac{P(\tau_j<\tau_{l}\;\forall l \neq j,\;\min_k\{\tau_k\} \in [\tau,\tau+\delta) )}{P(\min_k\{\tau_k\}\in[\tau,\tau+\delta))},
\end{align*}
where the second equality follows from the definition of the conditional expectation.
Since the conditions $\tau_j<\tau_{l}\;\forall l\neq j$ and $\min_k\{\tau_k\}\in[\tau,\tau+\delta)$ are equivalent to $\tau_j\in[\tau,\tau+\delta)$ and $\tau_l>\tau+\delta\,\,\forall l\neq j$, we obtain

\begin{align*}
P(\tau_j < \tau_{l}\;\forall l\neq j\;\vert\min_k\{\tau_k\}=\tau)&=\lim_{\delta\to 0}
\frac{P(\tau_j\in[\tau,\tau+\delta), \tau_l>\tau+\delta\,\,\forall l\neq j)}{P(\min_k\{\tau_k\}\in[\tau,\tau+\delta))},
\end{align*}
which yields
\[
P(\tau_j < \tau_{l}\;\forall l\neq j\;\vert\min_k\{\tau_k\}=\tau)=\lim_{\delta\to 0}
\frac{P(\tau_j\in[\tau,\tau+\delta))\prod_{\substack{l = 1\\ l\neq k}}^r P(\tau_l>\tau+\delta)}{P(\min_k\{\tau_k\}\in[\tau,\tau+\delta))}\]
by the independence of $\tau_l$ for $l=1,\ldots,r$.
Since

\begin{align*}
&P(\min_k \tau_k\in[\tau,\tau+\delta))\\&=P(\min_k \tau_k\geq \tau)-P(\min_k \tau_k\geq \tau +\delta)\\&=
\exp\left(-\int_t^{t+\tau}\sum_{l = 1}^r\kappa_l(n(t'),t')\di t'\right)-
\exp\left(-\int_t^{t+\tau + \delta}\sum_{l=1}^r\kappa_l(n(t'),t')\di t'\right).
\end{align*}
To simplify the notation let us set

\begin{equation}
    \label{eq:int_kappa}
    f(\tau;\kappa_j)=\exp\left(-\int_t^{t+\tau}\kappa_j(n(t'),t')\di t'\right).
\end{equation}
By \eqref{eq12}, we obtain

\begin{align*}
P(\tau_j < \tau_{l}\;\forall l\neq j\;\vert\min_k\{\tau_k\}=\tau)=&\lim_{\delta\rightarrow 0}%
\frac{\left(f(\tau;\kappa_j)-f(\tau+\delta;\kappa_j)\right)\prod_{\substack{l = 1\\ l\neq j}}^rf(\tau+\delta;\kappa_l)}%
{\prod_{l=1}^rf(\tau;\kappa_l)-\prod_{l=1}^rf(\tau+\delta;\kappa_l)}\\[2mm]
=&\frac{\kappa_j(n(t+\tau),t+\tau)}{\sum_{l=1}^r\kappa_l(n(t+\tau),t+\tau)}
\end{align*}
by L'Hôpital's rule. This concludes the proof of \eqref{eq:estimate}.
\end{proof}

The above description of the iterative procedure in Algorithm~\ref{alg:sec:model:subsec:gillespie:subsub:discreteupdates} shows that evaluations of $\Phi$ can be reduced by propensity accumulation which makes the implementation more efficient, especially if the time scales require several discrete updates within the next waiting time. First, we solve \eqref{eq:waitingtimeadapt2} and set $\hat{\Phi}$ as $-\ln r_1 - ((m_0+1)\Delta t-t)\Phi(m_0\Delta t)$ and $m_\tau=m_0$. While $\tau >(m_\tau+1)\Delta t-t$, we increase $m_\tau$ by 1, 
compute $\tau$ by
\begin{align*}
    \tau =\frac{\hat{\Phi}}{\Phi(m_\tau \Delta t)}+m_\tau \Delta t-t
\end{align*}
and set $\hat \Phi$ as $\hat \Phi-\Delta t \Phi(m_\tau \Delta t)$.

Algorithm~\ref{alg:sec:model:subsec:gillespie:subsub:discreteupdates2v2} for time-discrete updates and propensity accumulation overcomes the inefficient computations in Algorithm \ref{alg:sec:model:subsec:gillespie:subsub:discreteupdates} and produces a time series for reaction events only. As we want to keep track of updates not only by reactions but also by global updates, we treat discrete time steps equitably. Suppose that the last reaction occurred at time $t_0$. We define 

\begin{equation}
    \tilde{\tau} := 
        \begin{cases}
        \tau, & m_\tau = m_0,\\
        t_0 + \tau - m_\tau \Delta t, & m_\tau > m_0,
        \end{cases}
\end{equation} 
as the update to the last time step (either from reaction or global). For  $0 \leq m \leq m_\tau - m_0 -1$, we define a accumulation variable 

\begin{equation}\label{eq:DefPhihat}
    \hat\Phi_m :=
    \begin{cases}
        ((m_0 + 1) \Delta t - t_0) \Phi(m_0 \Delta t), & m = 0,\\
        \Delta t \Phi((m_0 + m)\Delta t), & m > 0,
    \end{cases}
\end{equation}
which stores the values of the integral of the piece-wise constant propensities.
For convenience we introduce $v = - \ln{r_1}$, which can be interpreted as a virtual normalised time. This allows us to write \eqref{eq:propensityeq} as

\begin{equation}\label{eq:propensityeqPhihat}
 v =  \sum_{m = 0}^{m_\tau - m_0 - 1} \hat\Phi_m + \tilde\tau \Phi(m_\tau \Delta t)
\end{equation}
For $m_\tau=m_0$, \eqref{eq:propensityeqPhihat} reduces to 

\begin{equation}\label{eq:propensityeqPhihatm0}
    v =  \tau \Phi(m_0 \Delta t).
\end{equation}
In  Algorithm~\ref{alg:sec:model:subsec:gillespie:subsub:discreteupdates2v2} we initially 
solve \eqref{eq:propensityeqPhihatm0}  for $\tau$ and set $\tilde \tau =\tau$. We  initialize the accumulation variable $\hat\Phi$ for $\sum_{m=0}^{m_\tau-m_0-1}\hat\Phi_m$ by setting $\hat{\Phi} = 0$ and $m_\tau = m_0$. If 
$\tilde\tau$ satisfies $t_0 + \tilde\tau \geq (m_\tau+1)\Delta t$, 
we increase $m_\tau$ by 1, replace $\hat \Phi$ by $\hat \Phi+\hat \Phi_{m_\tau-m_0-1}$, set time $t$ as $m_\tau \Delta t$ and compute the new prediction $\tilde\tau$ from \eqref{eq:propensityeqPhihat}, given by

\begin{align*}
    \tilde\tau =\frac{v - \hat\Phi}{\Phi(m_\tau \Delta t)}.
\end{align*}
Time $t$ is not a result of a reaction but of a discrete update, where changes in states could also happen. We repeat this procedure and if the predicted waiting time increment $\tilde\tau$ lies in the actual interval, we proceed with step~\ref{alg:line:GillespieDUupdateTime} of Algorithm~\ref{alg:sec:model:subsec:gillespie:subsub:discreteupdates}.

\begin{algorithm}[ht]
\caption{Gillespie algorithm with time-discrete updates and propensity accumulation}
\label{alg:sec:model:subsec:gillespie:subsub:discreteupdates2v2}
\begin{algorithmic}[1] \setcounter{ALG@line}{-1} 
\State Initialize the starting time $t$, the system configuration $n=(n_1,\ldots,n_s)$ and the rate functions $k_l(t)$ for $l=1,\ldots,r$ as well as the discrete time-step $\Delta t$ and $m_0\in \N$ such that $t\in[m_0\Delta t,(m_0+1)\Delta t)$.
\State Generate two random numbers $r_1, r_2$ uniformly distributed in $[0,1]$ and compute $v= - \ln r_1$.
\label{alg:line:GillespieDUBKrandom}
\algrenewcommand\alglinenumber[1]{\footnotesize #1a:}
\State Compute sum of propensities $\Phi(t)= \sum_{l=1}^r \kappa_l (n(t),t)$ and prediction ${\tilde\tau := \frac{v}{\Phi(t)}}$.
\Statex Set $m = m_0$, $\hat{\Phi} = 0$.  \addtocounter{ALG@line}{-1}
\algrenewcommand\alglinenumber[1]{\footnotesize #1b:}
\While{$t + \tilde\tau \geq (m+1) \Delta t$}: \label{alg:line:GillespieDUBKwhile}
    \StateXmulti{Set $\hat{\Phi} ::= \hat{\Phi} + \Phi(m \Delta t) \cdot \left( (m+1) \Delta t - t\right)$ for accumulation.
    Set $m ::= m+1$ and $t ::= m \Delta t$.
    Update propensities due to global update and compute new prediction
    \begin{equation*}
        \tilde\tau = \frac{v - \hat{\Phi}}{\Phi(m \Delta t)}.
    \end{equation*}}
\EndWhile
\algrenewcommand\alglinenumber[1]{\footnotesize #1:}
\State Set $ t::= t + \tilde\tau$ as the time of the next rule execution.
\State Determine the rule $R_i$ which is executed at time $t$ by finding $i$ such that
\begin{equation}
     \frac{1}{\Phi(t)} \sum_{l=1}^{i-1} \kappa_l(n(t), t) < r_2 \leq \frac{1}{\Phi(t)} \sum_{l=1}^{i} \kappa_l(n(t), t).
\end{equation}

\State Execute rule $R_i$ and update the new system configuration $n$.
\State \textbf{If} $t<T$  go to step~\ref{alg:line:GillespieDUBKrandom}, \textbf{otherwise} stop.
\end{algorithmic}
\end{algorithm}

\begin{remark}
\label{rem:discreteUpdatesBookKeepingTimestep}
For the time-step in the update of propensity accumulation $\hat{\Phi}$ in step~\ref{alg:line:GillespieDUBKwhile}b of Algorithm~\ref{alg:sec:model:subsec:gillespie:subsub:discreteupdates2v2}, $\left( (m_\tau +1) \Delta t - t\right) = \Delta t$  holds after the first while loop, so this part covers both cases of the definition of $\hat\Phi_m$ in \eqref{eq:DefPhihat}.
\end{remark}

\begin{remark}
\label{rem:discreteUpdatesNonEquidistant}
All considerations above can be adapted straight forward for discrete time steps which are not equidistant - even the acceleration by accumulation.
\end{remark}

The iterative procedure of finding by prediction and rejection is visualised in Figure~\ref{fig:GAdiscUpV2}.

\begin{figure}[ht]
    \centering
    \includegraphics[width=0.9\linewidth]{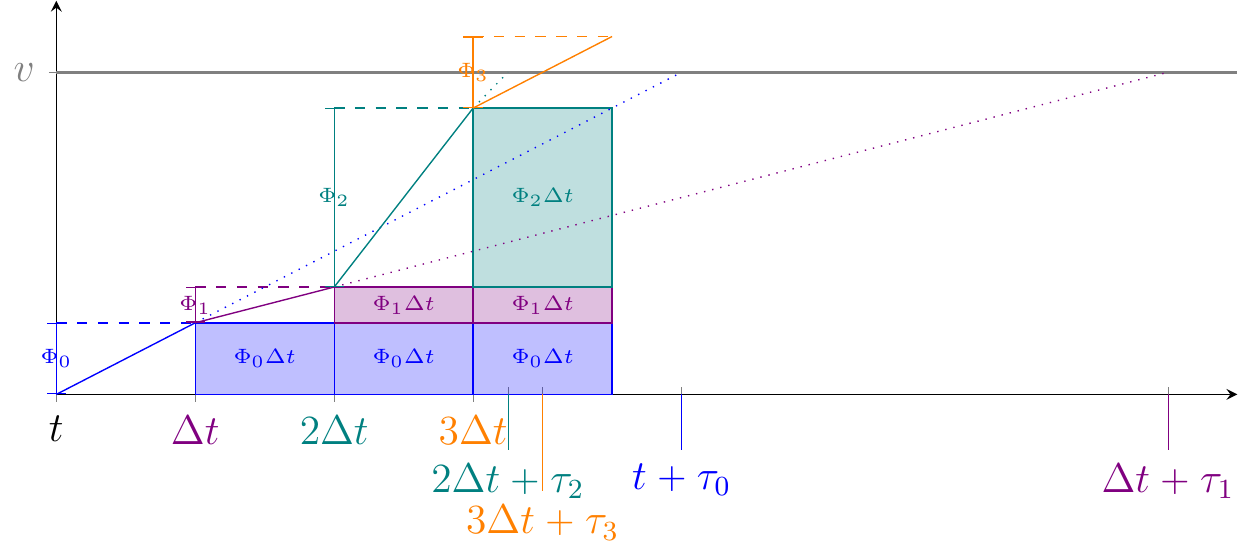}
    \caption{Visualisation of Algorithm~\ref{alg:sec:model:subsec:gillespie:subsub:discreteupdates2v2} for $t=0$ and $\Delta t = 1$; rectangles illustrate growing $\hat\Phi$, dotted lines denote rejected predictions (${\color{blue}\tau_0}, {\color{violet}\tau_1}, {\color{teal}\tau_2}$). The next correct time is $t + {\color{orange}3 \Delta t + \tau_3}$.}
    \label{fig:GAdiscUpV2}
\end{figure}

\subsection{Gillespie algorithm with deterministic delays}
\label{sec:model:subsec:gillespie:subsub:delay}

The idea of predicting tentative execution times as in Section~\ref{sec:model:subsec:gillespie:subsub:discreteupdates} can be applied to deterministic delays for rules, e.g., there is a deterministic waiting time until a specific change in types such as release from isolation occurs or the imposed waiting time for a second vaccination.
We extend the notation for rules in \eqref{eq:sec:model:subsec:reaction:reactionscheme} by delays

\begin{eqnarray}
\label{eq:sec:model:subsec:gillespie:subsub:delayedRule}
\sum_{i=1}^{s} \alpha_{ij}  T_i  \xrightarrow{k_j} \sum_{i=1}^{s} \beta_{ij}  T_i \mathop{\dashrightarrow}^{}_{\omega_j} \sum_{i=1}^{s} \gamma_{ij}  T_i,
\end{eqnarray}
where $\omega_j \in \mathbb{R}_0^+$ denote waiting times.
In Algorithm~\ref{alg:sec:model:subsec:gillespie:subsub:deterministicDelays} we use queues $Q^{(q)}$ for appending future delayed execution times of each rule with delays.
The queues $Q^{(q)}$ might be non-empty at the beginning.

\begin{algorithm}[ht]
\caption{Gillespie algorithm with deterministic delays}
\label{alg:sec:model:subsec:gillespie:subsub:deterministicDelays}
\begin{algorithmic}[1] \setcounter{ALG@line}{-1} 
\State Initialize the starting time $t$, the system configuration $n=(n_1,\ldots,n_s)$ and the rate functions $k_l(t)$ for $l=1,\ldots,r$ as well as the delay queues $Q^{(q)}$.
\State Generate two random numbers $r_1, r_2$ uniformly distributed in $[0,1]$ and compute $v= - \ln r_1$. \label{alg:line:GillespieDUDDrandom} 
\algrenewcommand\alglinenumber[1]{\footnotesize #1a:}
\State Compute sum of propensities $\Phi(t)= \sum_{l=1}^r \kappa_l (n(t),t)$ and prediction ${\tilde\tau := \frac{v}{\Phi(t)}}$.
\Statex \textbf{If} all $Q^{(q)}$ are non-empty \textbf{then} set $d ::= \min_q Q^{(q)}_1$ and remove entry in this queue, \textbf{else} set $\delta d ::= \infty$.
\Statex Set $\hat{\Phi} = 0$. \addtocounter{ALG@line}{-1}
\algrenewcommand\alglinenumber[1]{\footnotesize #1b:}
\While{$t + \tilde\tau \geq d$}:
    \StateXfirst Set $\hat{\Phi} ::= \hat{\Phi} + \Phi(t) \cdot \left(d - t\right)$ for accumulation. Set $t ::= d$.
    \StateX Update propensities due to delayed update and compute new prediction
    \begin{equation*}
        \tilde\tau = \frac{v - \hat{\Phi}}{\Phi(t)}.
    \end{equation*}
    \StateXmulti{\textbf{If} all $Q^{(q)}$ are non-empty\textbf{then} set $d ::= \min_q Q^{(q)}_1$ and remove entry in this queue, \textbf{else} set $\delta d ::= \infty$}.
\EndWhile
\algrenewcommand\alglinenumber[1]{\footnotesize #1:}
\State Set $ t::= t + \tilde\tau$ as the time of the next rule execution.
\State Determine the rule $R_i$ which is executed at time $t$ by finding $i$ such that
\begin{equation}
     \frac{1}{\Phi(t)} \sum_{l=1}^{i-1} \kappa_l(n(t), t) < r_2 \leq \frac{1}{\Phi(t)} \sum_{l=1}^{i} \kappa_l(n(t), t).
\end{equation}

\State Execute rule $R_i$, possibly with appending delayed execution times to the queues, and update the new system configuration $n$.
\State \textbf{If} $t<T$ go to step~\ref{alg:line:GillespieDUDDrandom}, \textbf{otherwise} stop.
\end{algorithmic}
\end{algorithm}

\newpage
\section{Numerical simulations}
\label{sec:simulations}

In this section, we consider different examples of rule-based epidemiological models and simulate them based on the stochastic algorithms in Section~\ref{sec:gillespie}.

\subsection{SIRD}
\label{sec:simulations:SIRD}


As we focus on simulating stochastic models of epidemics, we start with the simple SIRD model with types in Table~\ref{tab:sec:standard:subsec:sird:types} and rules in~\eqref{eq:sec:standard:subsec:sir:rules}, where $\iota > 0$, $\rho > 0$ and $\delta > 0$ are the rate parameters for infection, recovery and death.
The propensities are %
$\kappa_1(\nS, \nI, \nR, \nD) = \frac{\iota}{N} \nS \nI$, %
$\kappa_2(\nS, \nI, \nR, \nD) = \rho \nI$, %
$\kappa_3(\nS, \nI, \nR, \nD) = \delta \nI$, %
where the symbols for types represent in this context the number of individuals of resp. type. $N$ denotes the total size of the population.
For the parameters we choose a rather small population of $N = 1000$ to highlight the stochastic effects. Further parameters are conform to~\cite{rbse} and stated in Figure~\ref{fig:plotSIRD(SSA)}.

\noindent
\begin{minipage}[t]{0.45\textwidth}\vspace{0pt}%
\vspace{-7.5pt}
\captionsetup{type=table}%
\caption{Types of the standard SIRD model.}
\label{tab:sec:standard:subsec:sird:types}
\begin{center}
\extrarowheight=0pt
\addtolength{\extrarowheight}{\belowrulesep}
\aboverulesep=0pt
\belowrulesep=0pt
\begin{tabular}{c  l} 
	\toprule
	Type & Interpretation  \\
	\hline
	\cellcolor{tabgreen!80}$S$ & Susceptible  \\
	\cellcolor{taborange!80}$I$ & Infectious  \\
	\cellcolor{tabblue!80}$R$ & Recovered  \\
	\cellcolor{black!80}\color{white}$D$ & Deceased  \\
	\bottomrule
\end{tabular}
\end{center}
\end{minipage}%
\hfill%
\begin{minipage}[t]{0.45\textwidth}\vspace{0pt}%
Rules of the standard SIRD model:

\begin{equation}
\label{eq:sec:standard:subsec:sir:rules}
\begin{alignedat}{3}
	S + I &\xrightarrow{\iota}  && I + I, &&\quad\text{[Infection]}\\
	    I &\xrightarrow{\rho}   && R,     &&\quad\text{[Recovery]}\\
	    I &\xrightarrow{\delta} && D,     &&\quad\text{[Death]}\\
\end{alignedat}
\end{equation}
\end{minipage}
\vspace{10pt}

\begin{figure}[ht]
    \centering
    \begin{subfigure}{0.32\linewidth}
        \centering
        \includegraphics[width=\linewidth]{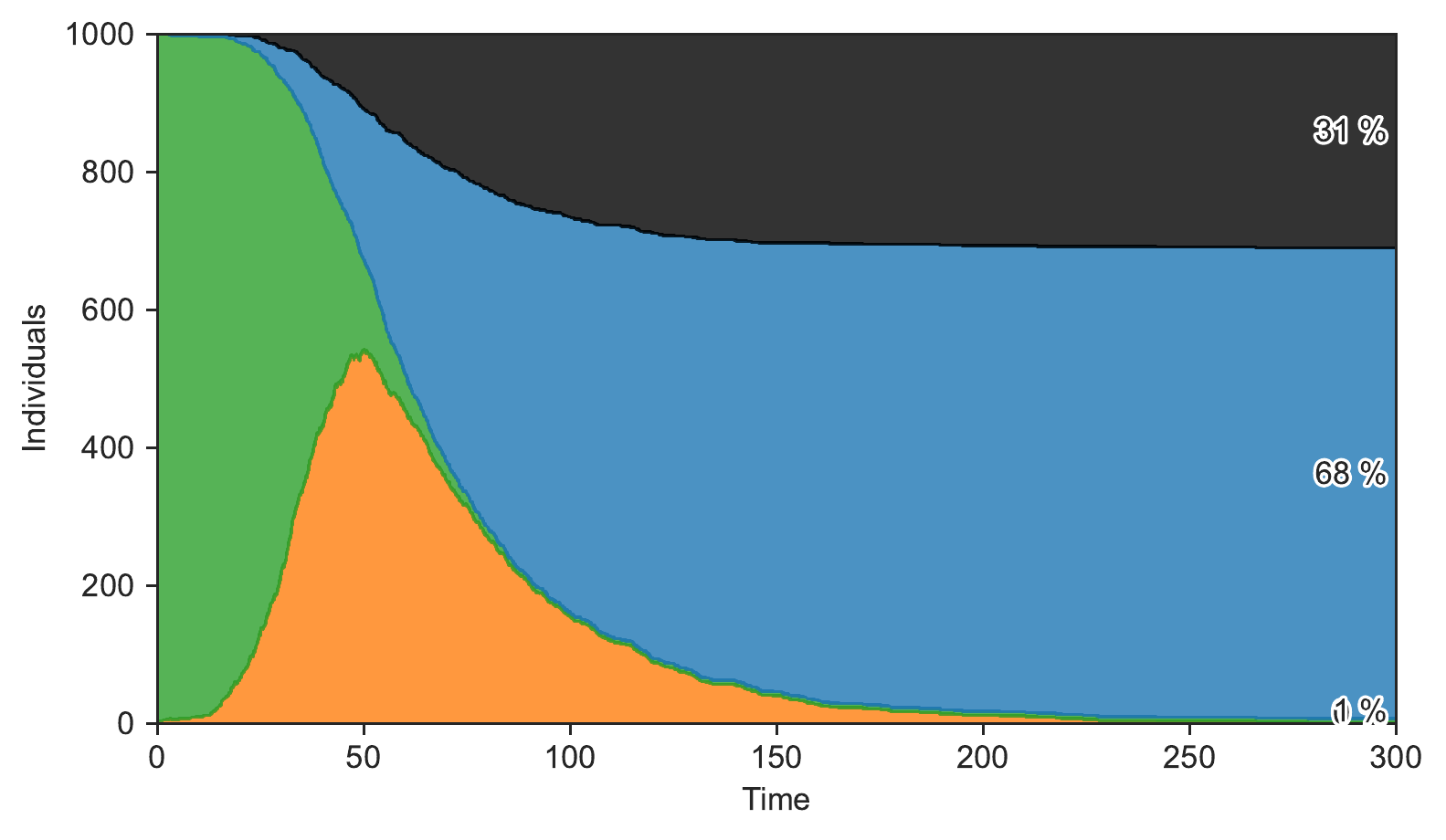}
        \caption{seed $= 0$}
        \label{fig:plotSIRD(SSA)1}
    \end{subfigure}
    \hfill
    \begin{subfigure}{0.32\linewidth}
        \centering
        \includegraphics[width=\linewidth]{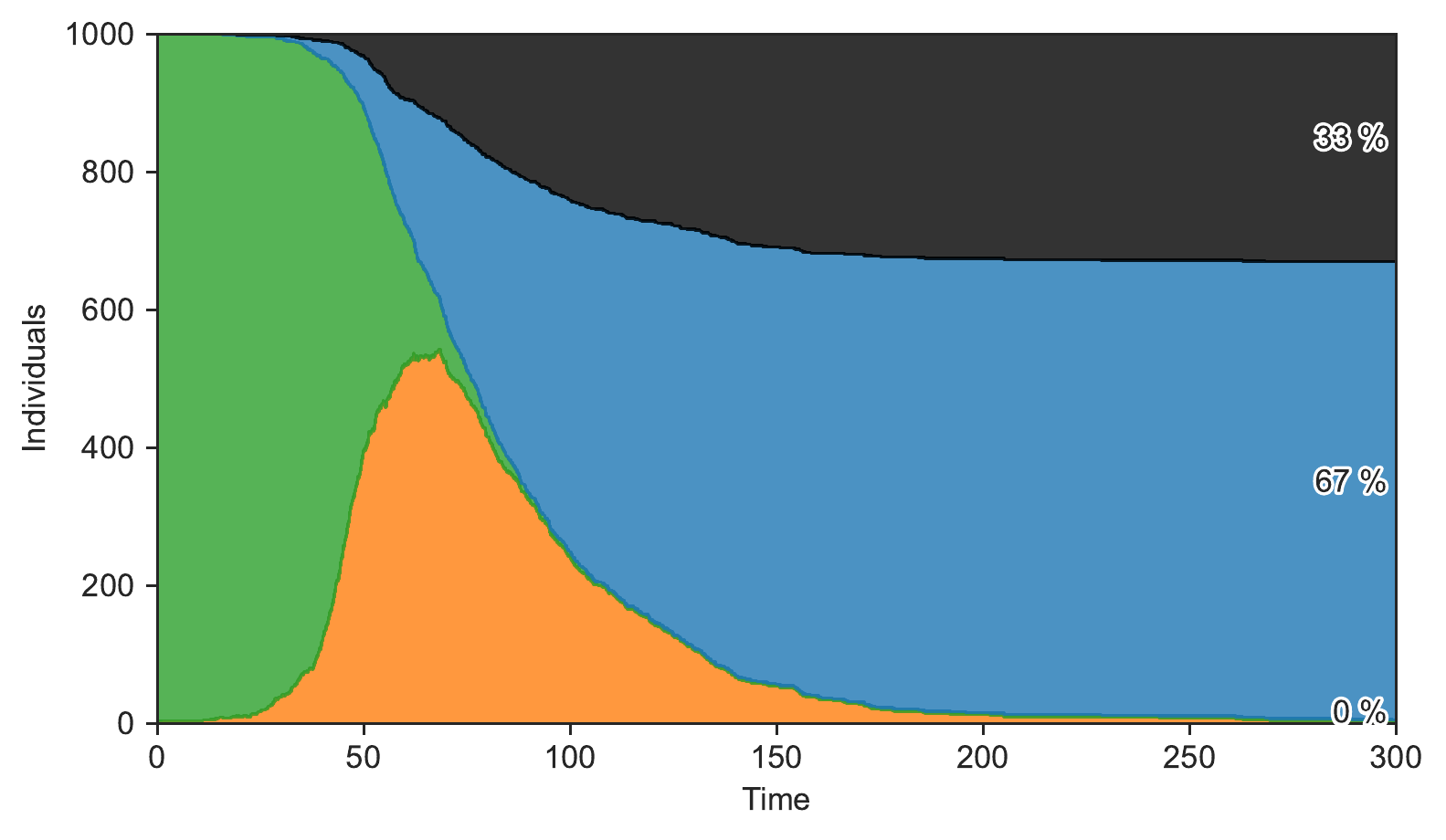}
        \caption{seed $= 1$}
        \label{fig:plotSIRD(SSA)2}
    \end{subfigure}
    \hfill
    \begin{subfigure}{0.32\linewidth}
        \centering
        \includegraphics[width=\linewidth]{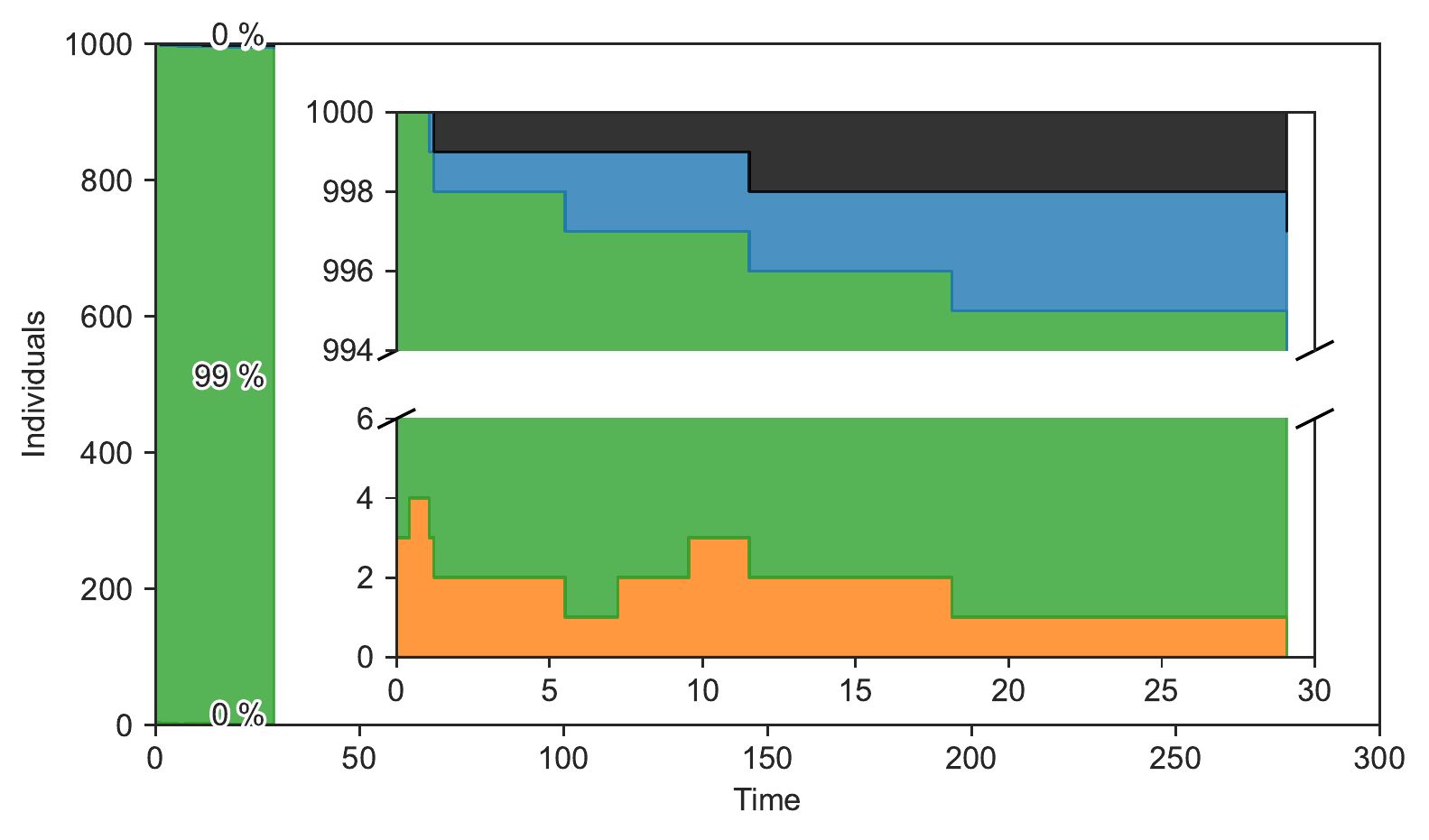}
        \caption{seed $= 11922$}
        \label{fig:plotSIRD(SSA)3}
    \end{subfigure}
    \caption{Three realisations of SSA simulation by Algorithm~\ref{alg:sec:model:subsec:gillespie}  with basic parameters $N = 1000$, $\iota = \iota^0 = 0.2$, $\rho = \rho^0 = 0.02$, $\delta = \delta^0 = 0.01$, $\nI_0 = \nI_0^0 = 3$.}
    \label{fig:plotSIRD(SSA)}
\end{figure}

Stochastic simulations offer a range of realisations, see Figure~\ref{fig:plotSIRD(SSA)}. Some are very near to ODE simulations \subref{fig:plotSIRD(SSA)1}. The randomness dominant in the beginning can also lead to a delay \subref{fig:plotSIRD(SSA)2} or even prevent an outbreak \subref{fig:plotSIRD(SSA)3} (with enlarged inset). For reproducibility we explicitly state the seeds for the random number generator. 

In Figure~\ref{fig:plotSIRD(SSA)_extra} we depict additional plots for the governing quantities of the Gillespie algorithm. The propensities change a lot with growing numbers of infectious individuals, determining the time between events shown a barcodes / rug plot (step~\ref{alg:line:GillespieTwo} and \ref{alg:line:GillespieThree} in Algorithm~\ref{alg:sec:model:subsec:gillespie}). The normalised propensities correspond to the probabilities of which the specific rules are executed (step~\ref{alg:line:GillespieFour} and \ref{alg:line:GillespieFive}) -- thus the type of event represented by the colour and row of the stroke in the barcode.

\begin{figure}[ht]
    \centering
    \includegraphics[width=\linewidth]{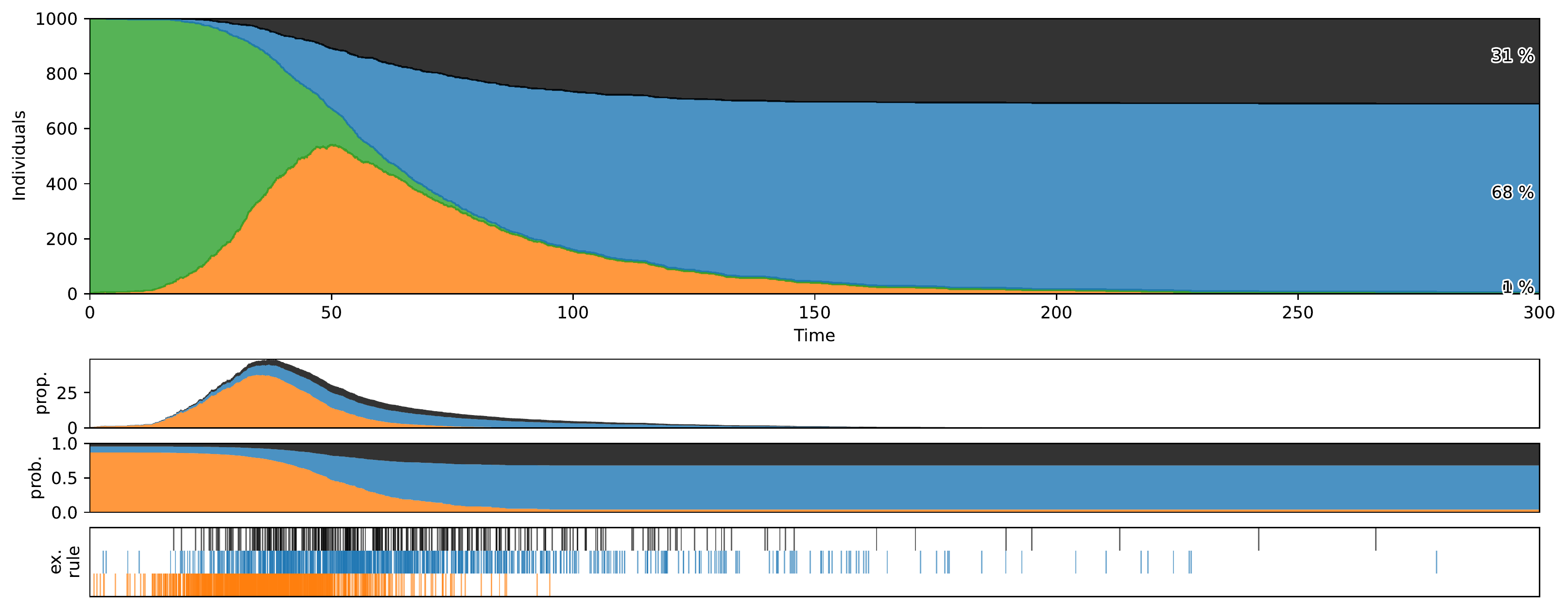}
    \caption{Realisation of SSA simulation by Algorithm~\ref{alg:sec:model:subsec:gillespie} for SIRD model with seed $= 0$. With propensities, probabilities and executed rules as event barcodes.}
    \label{fig:plotSIRD(SSA)_extra}
\end{figure}

By construction the Gillespie algorithm~\ref{alg:sec:model:subsec:gillespie} produces waiting times between consecutive events which are exponentially distributed with intensities depending on the number of individuals. The same holds for the inter event times between events of the same type.\footnote{Sometimes ambiguously stated in other literature as "exponentially distributed with rate parameter"; thus neglecting the dependency on the number of individuals resulting from the propensities.}
The low number of infectious at the beginning and end of the epidemic cause more likely larger event times, see the dark lines in Figure~\ref{fig:plotSIRD(SSA)_interEventTimes}, also observable in Figure~\ref{fig:plotSIRD(SSA)_extra}. To demonstrate that the inter event times are exponentially distributed, we use the closure under scaling and rescale the inter event times by factors from the propensities

\begin{alignat}{4}
    \widetilde{\Delta t^I_e} &= \frac{S I}{N} \Delta t^I_e &=&& \frac{S I}{N} &&(t^I_{e+1} - t^I_{e}),\\
    \widetilde{\Delta t^R_e} &= I \Delta t^R_e &=&& I &&(t^R_{e+1} - t^R_{e}),\\
    \widetilde{\Delta t^D_e} &= I \Delta t^D_e &=&& I &&(t^D_{e+1} - t^D_{e}).
\end{alignat}
Here, the index $e$ corresponds to events where  types are denoted in the superscript -- we omit the first $I_0 = 3$ remove events (recovery or death) to compensate the initial infectious individuals.
Thus the scaled inter event times are exponentially distributed with fixed intensity, i.e.    $\widetilde{\Delta t^I_e} \sim \Exp(\iota)$, $\widetilde{\Delta t^R_e} \sim \Exp(\rho)$ and $\widetilde{\Delta t^D_e} \sim \Exp(\delta)$.
This can also be seen in Figure~\ref{fig:plotSIRD(SSA)_interEventTimesDistributions}.


\begin{figure}[ht]
    \centering
    \begin{subfigure}{0.32\linewidth}
        \centering
        \includegraphics[width=\linewidth]{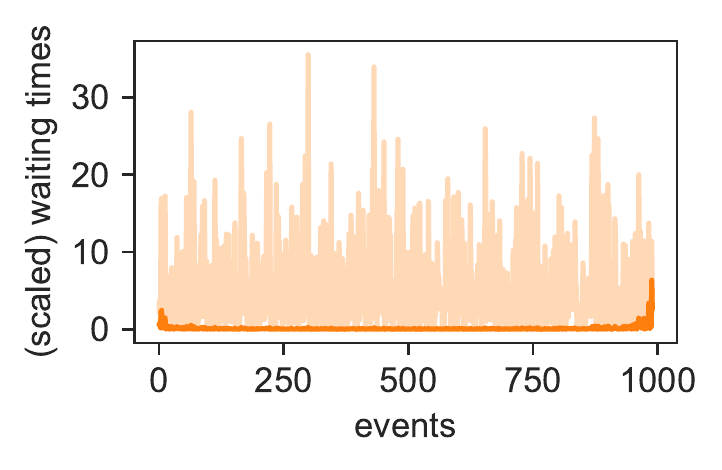}
        \caption{Infection}
        \label{fig:plotSIRD(SSA)_interEventTimesI}
    \end{subfigure}
    \hfill
    \begin{subfigure}{0.32\linewidth}
        \centering
        \includegraphics[width=\linewidth]{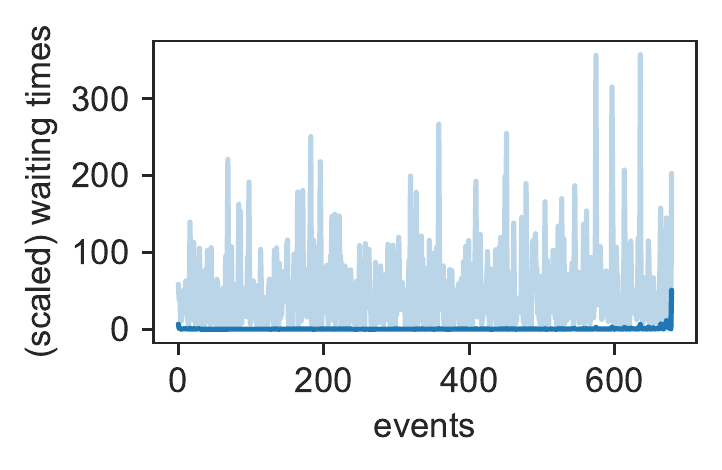}
        \caption{Recovery}
        \label{fig:plotSIRD(SSA)_interEventTimesR}
    \end{subfigure}
    \hfill
    \begin{subfigure}{0.32\linewidth}
        \centering
        \includegraphics[width=\linewidth]{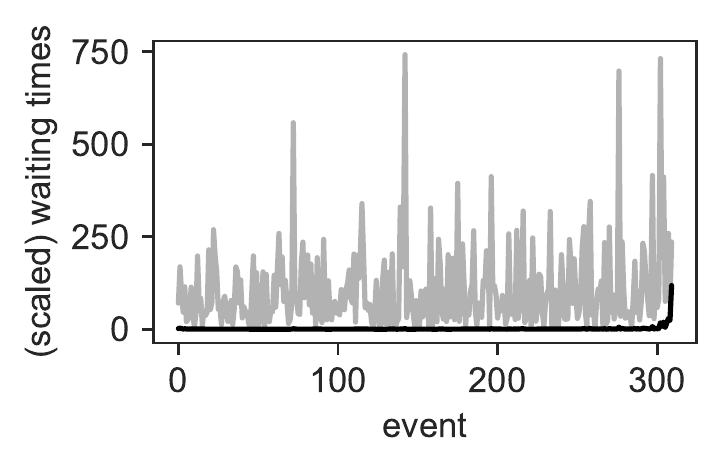}
        \caption{Death}
        \label{fig:plotSIRD(SSA)_interEventTimesD}
    \end{subfigure}
    \caption{Inter event times (dark) and scaled inter event times (bright) for realisation of SSA in Figure~\ref{fig:plotSIRD(SSA)1}. (Connected points to distinguish unscaled and scaled event times.)}
    
    \label{fig:plotSIRD(SSA)_interEventTimes}
\end{figure}

\begin{figure}[ht]
    \centering
    \begin{subfigure}{0.32\linewidth}
        \centering
        \includegraphics[width=\linewidth]{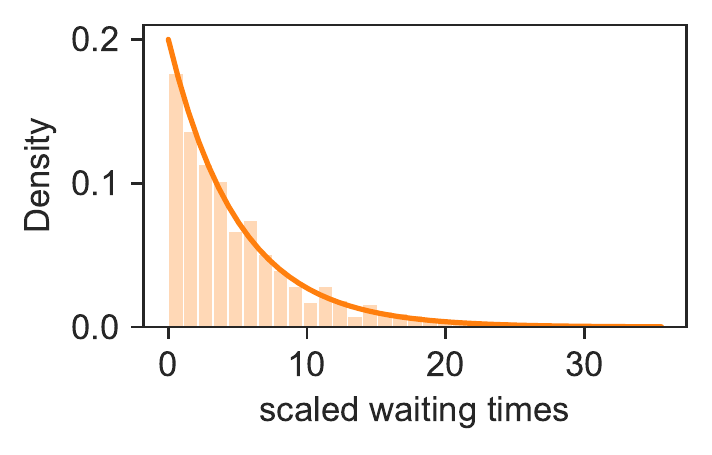}
        \caption{Infection}
        \label{fig:plotSIRD(SSA)_interEventTimesDistributionsI}
    \end{subfigure}
    \hfill
    \begin{subfigure}{0.32\linewidth}
        \centering
        \includegraphics[width=\linewidth]{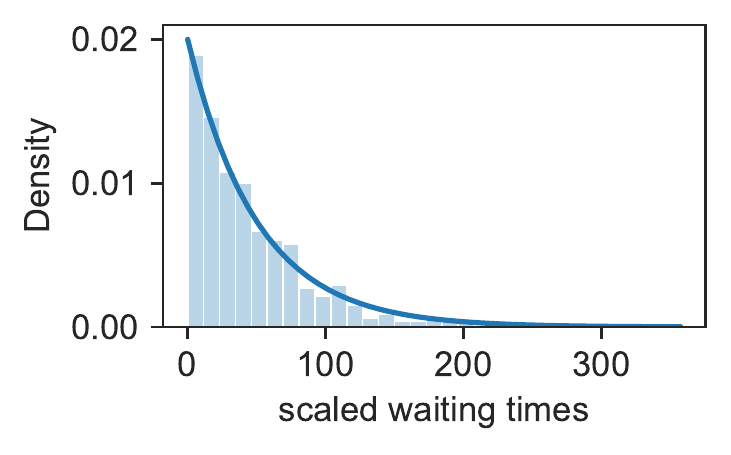}
        \caption{Recovery}
        \label{fig:plotSIRD(SSA)_interEventTimesDistributionsR}
    \end{subfigure}
    \hfill
    \begin{subfigure}{0.32\linewidth}
        \centering
        \includegraphics[width=\linewidth]{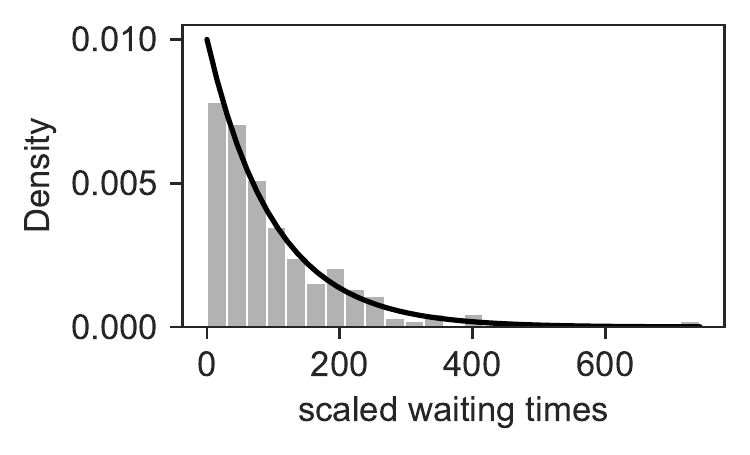}
        \caption{Death}
        \label{fig:plotSIRD(SSA)_interEventTimesDistributionsD}
    \end{subfigure}
    \caption{Densities of scaled inter event times (bars) and theoretical exponential densities (lines) for realisation of SSA in Figure~\ref{fig:plotSIRD(SSA)1}.}
    \label{fig:plotSIRD(SSA)_interEventTimesDistributions}
\end{figure}

\subsection{SIRD with age classes}

A typical use case for Algorithm~\ref{alg:sec:model:subsec:gillespie:subsub:subtypes} is a SIRD model with dividing the types by ages into discrete sub classes, listed in Table~\ref{tab:sec:standard:subsec:sirda:types}. The main rules remain like in \eqref{eq:sec:standard:subsec:sir:rules} but with additional sub rules for the age classes.

\noindent
\begin{minipage}[t]{0.45\textwidth}\vspace{0pt}%
\vspace{-7.5pt}
\captionof{table}{Types of the SIRDage model.}
\label{tab:sec:standard:subsec:sirda:types}
\begin{center}
\extrarowheight=0pt
\addtolength{\extrarowheight}{\belowrulesep}
\aboverulesep=0pt
\belowrulesep=0pt
\begin{tabular}{*{3}{>{\centering\arraybackslash}m{2em}}  l} 
	\toprule
	\multicolumn{3}{c}{Type} & Interpretation  \\
    \hline
	\multicolumn{3}{c}{$\cdot_a$, $a \in A=\{0,1,2\}$} & age sub class:\\
	&&& 0: young,\\
	&&& 1: middle,\\
	\cellcolor{gray!40}$\cdot_0$ & \cellcolor{gray!60}$\cdot_1$ & \cellcolor{gray!80}$\cdot_2$ & 2: old\\
	\hline
	\cellcolor{tabgreen!40}$S_0$ & \cellcolor{tabgreen!60}$S_1$ & \cellcolor{tabgreen!80}$S_2$ & Susceptible  \\ 
	\cellcolor{taborange!40}$I_0$ & \cellcolor{taborange!60}$I_1$ & \cellcolor{taborange!80}$I_2$ & Infectious  \\
	\cellcolor{tabblue!40}$R_0$ & \cellcolor{tabblue!60}$R_1$ & \cellcolor{tabblue!80}$R_2$ & Recovered  \\
	\cellcolor{black!40}\color{white}$D_0$ & \cellcolor{black!60}\color{white}$D_1$ & \cellcolor{black!80}\color{white}$D_2$ & Deceased  \\
	\bottomrule
\end{tabular}
\end{center}

\end{minipage}
\hfill
\begin{minipage}[t]{0.45\textwidth}\vspace{0pt}%
Rules of the SIRDage model:\\ 
for $a, a' \in A$:
\nolinebreak

\begin{equation}
\label{eq:sec:standard:subsec:sirda:rules}
\begin{alignedat}{3}
	S_a + I_{a'} &\xrightarrow{\iota_{a,a'}} && I_a + I_{a'}, &%
&\quad\text{[Infection]}%
	\\
	      I_a    &\xrightarrow{\rho_a}       && R_a,          &%
	   &\quad\text{[Recovery]}%
	   \\
	      I_a    &\xrightarrow{\delta_a}     && D_a,          &%
	   &\quad\text{[Death]}%
	   \\
\end{alignedat}
\end{equation}
\end{minipage}

\vspace{10pt}
The rule parameters are age dependent assuming that older age classes recover less and die more likely, here $\rho = \rho^0 (1.5, 1.0, 0.5)$ and $\delta = \delta^0 (0.25, 1.0, 2.0)$. For infection we simplify that the parameter only depends on the age induced contact rate with contact matrix $C$ based on values in \cite{Wallinga2006}; infectivity and susceptibility are assumed to be the same.
We also model contact reduction by $80 \%$ with the oldest group by scaling the contact matrix: 

\begin{equation*}
    C = \begin{pmatrix}
            76 & 38 &  5\\
            38 & 67 & 13\\
             5 & 13 & 9
        \end{pmatrix},
    \quad %
    \iota^{(fc)} = c \cdot C,%
    \quad%
    \iota^{(rc)} = c \cdot
        \begin{pmatrix} 
            1.0 & 1.0 & 0.2\\
            1.0 & 1.0 & 0.2\\
            0.2 & 0.2 & 1.0
        \end{pmatrix}
        \circ C,
\end{equation*}
where $\circ$ denotes the Hadamard product and $c \approx 0.00485$ is used to scale the infection parameter such that the model with no contact reduction has the same basis reproduction number $\mathcal{R}_0 = 6 \nicefrac{2}{3}$ as the simple SIRD model \eqref{eq:sec:standard:subsec:sir:rules}, see also \cite{rbse}.
The basis reproduction number for the model with contact reduction is slightly less, approximately $6.58$.

\begin{figure}[ht]
    \centering
    \begin{subfigure}{\linewidth}
        \centering
        \includegraphics[width=\linewidth]{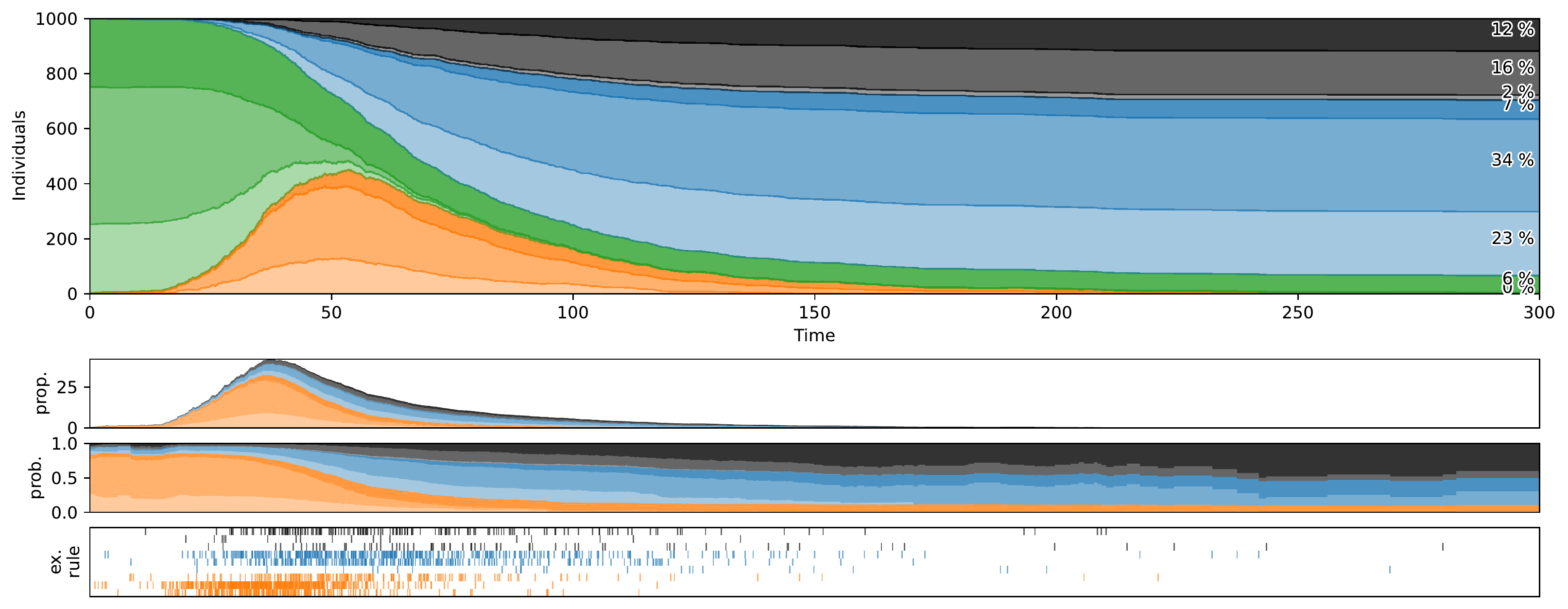}
        \caption{Without contact reduction (fc).}
        \label{fig:plotSIRDa(SSA)_FullContact_extra}
    \end{subfigure}
    
    \begin{subfigure}{\linewidth}
        \centering
        \includegraphics[width=\linewidth]{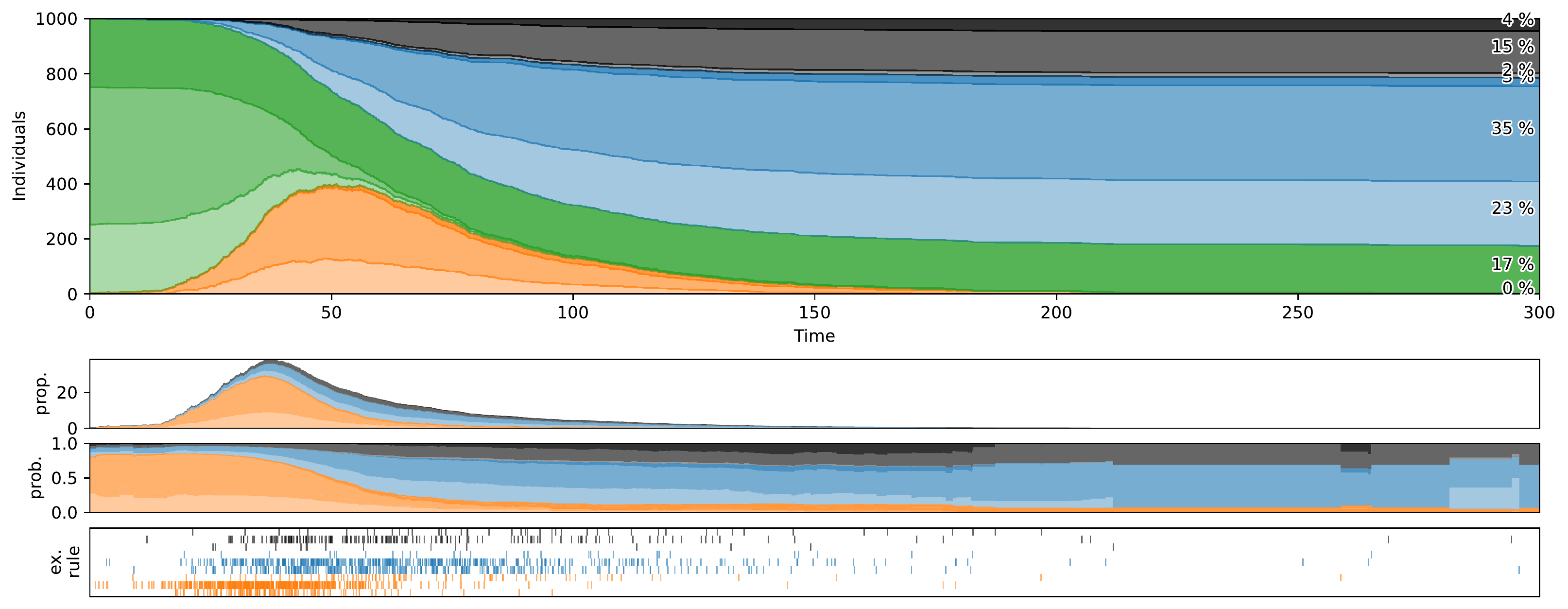}
        \caption{With contact reduction (rc).}
        \label{fig:plotSIRDa(SSA)_ReducedContact_extra}
    \end{subfigure}
    
    \caption{Two realisations of SSA simulation by Algorithm~\ref{alg:sec:model:subsec:gillespie:subsub:discreteupdates2v2} for SIRDage model without / with contact reduction.}
    \label{fig:plotSIRDa(SSA)_extra}
\end{figure}

For both simulation results in Figure~\ref{fig:plotSIRDa(SSA)_extra}, we start with $S(0) = (249, 499, 249)$ susceptibles and $I(0) = (1, 1, 1)$ infectious and choose a random seed of $0$ for a characteristic realisation comparable to ODE simulations.
Figure~\ref{fig:plotSIRDa(SSA)_FullContact_extra} shows that a large fraction of the oldest group suffers severely -- nearly half of the subpopulation dies.
By the contact reduction, simulated in Figure~\ref{fig:plotSIRDa(SSA)_ReducedContact_extra}, a large fraction of the oldest age sub class survives the epidemic without getting infected.
In the plots for propensities and probabilities for clarity only the target subclasses are colour coded by their rule. For the event barcode / rug plot we separate target subclass executed rules by rows.
In the plot of the executed rules we can observe that event densities are quite different: For the youngest group very centered, whereas for the oldest group very rarely and wider. The events for the middle group dominate due to the large size.

\subsection{SIRD with lockdown}

When stochastically simulating more complex models of epidemics, one often has to account for changes based on discrete times. We apply Algorithm~\ref{alg:sec:model:subsec:gillespie:subsub:discreteupdates2v2} for a SIRD model like in \eqref{eq:sec:standard:subsec:sir:rules} but with time dependent infection parameter $\iota(t)$ to model lockdowns of different duration and severity. It is important that $\iota(t)$ is a step function, here

\begin{equation}
\label{eq:SIRDlockdown:infection}
\iota(t) = \iota^0 \cdot
    \begin{cases}
      0.2, & \text{for } \phantom{1}10 \leq t < \phantom{1}20,\\
      0.5, & \text{for } \phantom{1}40 \leq t < \phantom{1}50,\\
      0.1, & \text{for } \phantom{1}50 \leq t < \phantom{1}70,\\
      0.2, & \text{for } 100 \leq t < 120,\\
      0.0, & \text{for } 160 \leq t < 200,\\
      1.0, & \text{else}.
    \end{cases}
\end{equation}

\begin{sidewaysfigure}
    \centering
    \includegraphics[width=\textwidth,]{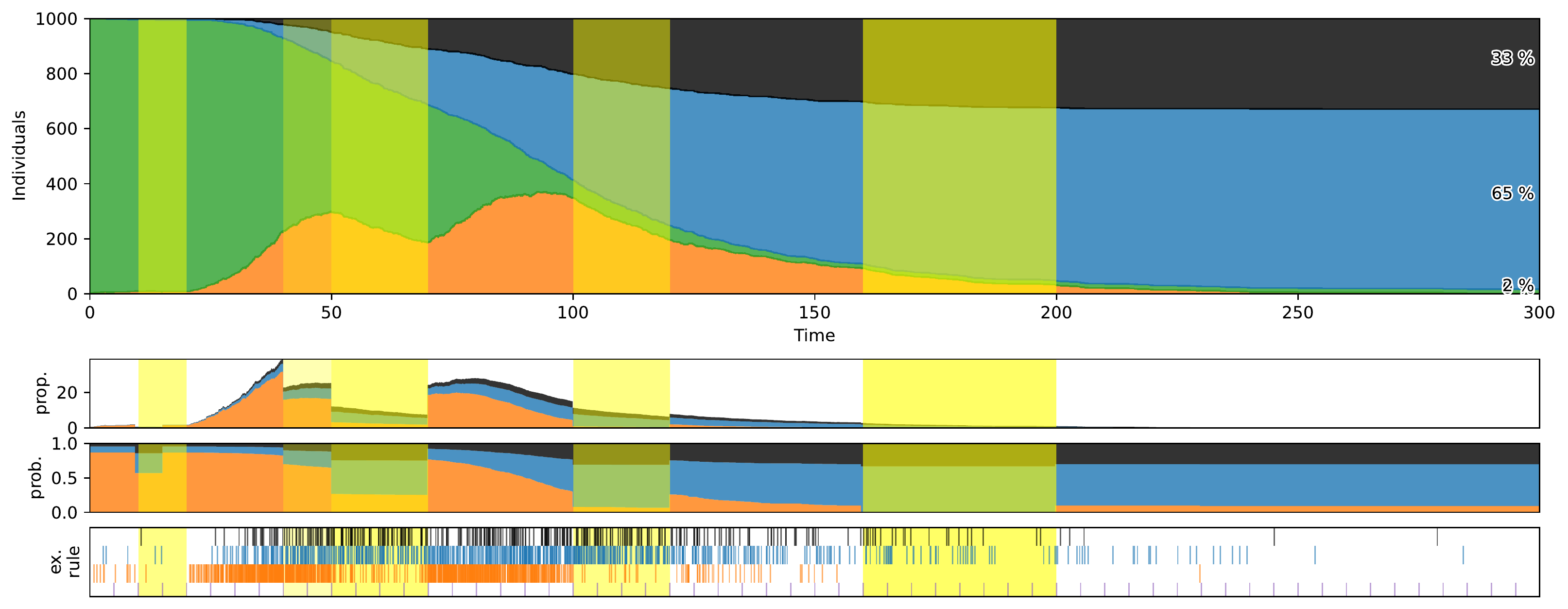}
    
    \caption{One realisation of SSA simulation by Algorithm~\ref{alg:sec:model:subsec:gillespie:subsub:discreteupdates2v2}  with basic parameters, infection parameter step function in \eqref{eq:SIRDlockdown:infection} and seed $= 0$. With propensities, probabilities and executed rules as event barcodes. Lockdown severity indicated by yellow areas.}
    \label{fig:plotSIRD(SSA)lockdown_extra}
\end{sidewaysfigure}

The results in Figure~\ref{fig:plotSIRD(SSA)lockdown_extra} show that the propensities and thus the probabilities depend heavily on lockdowns. The duration and severity are depicted as yellow overlays -- the more opaque the severer.
The reciprocal of the sum of the propensity determines the stochastic time steps between events -- this can also be observed in the density of the bars in the rug plot at the bottom.
The lockdown severity affects exclusively the propensity of the infection rule. Thus changing the proportion in the probabilities highly.
In the extreme case, for $160 \leq t < 200$, where the factor $0$, i.e. severity is $1$, no execution of the infection rule can occur.
Note that lines in the bottom row depict not ticks but the time-steps with $\Delta t = 5$ of the algorithm and show that these global events for potential lockdown changes are produced correctly.
\clearpage

\subsection{SIRD with isolation}

Another case in epidemics, when deterministic and fixed periods of time might appear, is isolation, i.e.,  infected individuals are separated from people who are not sick. We start again from the simple SIRD model in \eqref{eq:sec:standard:subsec:sir:rules} and add a type $Q$ for isolated infected, see Table~\ref{table:sec:standard:subsec:sird-quarantine}.
We assume that with rate parameter $\alpha = 0.05$ the infection of an individual is detected and isolation of $\omega = 100$ days is imposed -- exaggerated for clarity. Furthermore, we assume that the people placed in isolation do not infect anyone or die and are considered recovered at the end of the isolation. Thus only one single independent rule is added in \eqref{eq:sec:standard:subsec:sirdq:rules}, where the dashed arrow denotes a rule with deterministic fixed delay $\omega$.

\noindent
\begin{minipage}[t]{0.45\textwidth}\vspace{0pt}%
\vspace{-7.5pt}
\captionsetup{type=table}
\caption{The types of the standard SIR model with isolation.}
\label{table:sec:standard:subsec:sird-quarantine}
\begin{center}
\extrarowheight=0pt
\addtolength{\extrarowheight}{\belowrulesep}
\aboverulesep=0pt
\belowrulesep=0pt
\begin{tabular}{c   l} 
	\toprule
	Type & Interpretation\\
    \hline
	\cellcolor{tabgreen!80}$S$ & Susceptible\\ 
	\cellcolor{taborange!80}$I$ & Infectious\\
	\cellcolor{tabpink!80}$Q$ & Isolated\\
	\cellcolor{tabblue!80}$R$ & Recovered\\
	\cellcolor{black!80}\color{white}$D$ & Deceased\\
	\bottomrule
\end{tabular}
\end{center}
\end{minipage}
\hfill
\begin{minipage}[t]{0.48\textwidth}\vspace{0pt}%
Rules of the SIRD model with isolation:

\begin{equation}
\label{eq:sec:standard:subsec:sirdq:rules}
\begin{alignedat}{4}
	S + \;&I &&\xrightarrow{\iota}  && I + I, &\quad&\text{[Infection]}\\
          &I &&\xrightarrow{\alpha} && Q \mathop{\dashrightarrow}^{}_{\omega} R,       &\quad&\text{[Isolation \& sub-}\\
          &    &&                     &&  &\quad&\text{\phantom{[}sequent Recovery]}\\
	      &I &&\xrightarrow{\rho}   && R,     &\quad&\text{[Recovery]}\\
          &I &&\xrightarrow{\delta} && D,     &\quad&\text{[Death]}\\
\end{alignedat}
\end{equation}
\end{minipage}
\vspace{10pt}

\begin{figure}[ht]
    \centering
    \includegraphics[width=\linewidth]{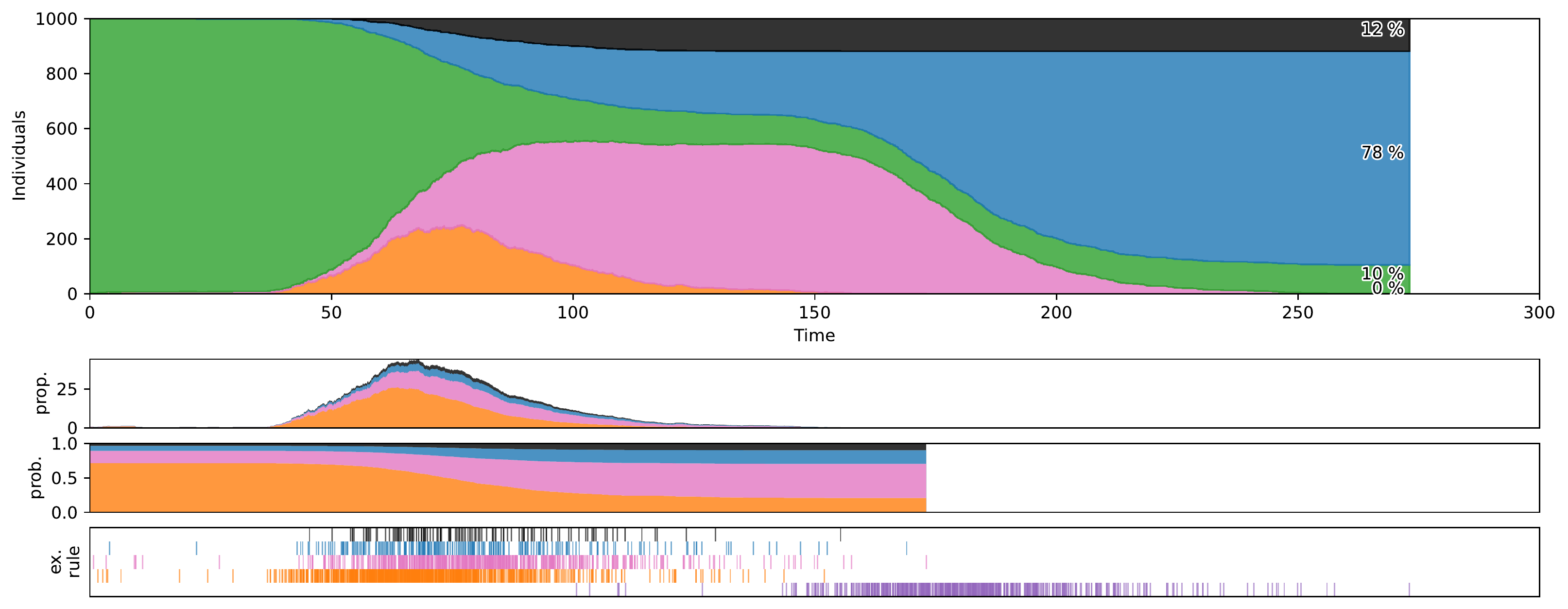}
    \caption{Realisation of SSA simulation by Algorithm~\ref{alg:sec:model:subsec:gillespie:subsub:deterministicDelays}  with basic parameters, $\alpha = 0.05$, $\omega = 100$ and seed $= 0$.}
    \label{fig:plotSIRD(SSA)quarantine_extra}
\end{figure}

The numerical results in Figure~\ref{fig:plotSIRD(SSA)quarantine_extra} show that shortly after the increase of infectious individuals the detection and isolation is the dominant rule, clearly observable in the plot for probabilities.
At time $t \approx 173$ the last infectious individual is detected and placed in isolation.
Hence the propensities for the stochastic rules -- also for the stochastic part of the detection -- are zero, and cause undefined probabilities.
The only events after this time are the release from isolation depicted as purple bars in the bottom row of the event barcode.
Note that the purple row is identical to the pink row for detection and imposing isolation shifted by exactly $\omega = 100$.

\section{Discussion}

In this paper we presented a number of extensions of the Gillespie algorithm for application in epidemiology. However, it is obvious that such generalisations of the algorithm are also useful in other areas of applications. For example, the subtype extension to reaction schemes, generalising \eqref{eq:sec:model:subsec:reaction:reactionscheme} to \eqref{eq:sec:model:subsec:gillespie:subtypes}, can be applied to protein biochemistry, where proteins are considered to be allowed to come in different configurations, i.e. folding states. In pharmacology, outside interventions to the system are the norm, not the exception. This implies the algorithms constructed for interventions like lockdowns have a good use in this area of application as well. The overall tree of generalisations of the Gillespie algorithm is summarised in Figure \ref{fig:Gillespie_algorithms}.

\begin{figure}[ht]
    \centering
    \includegraphics[width=0.5\linewidth]{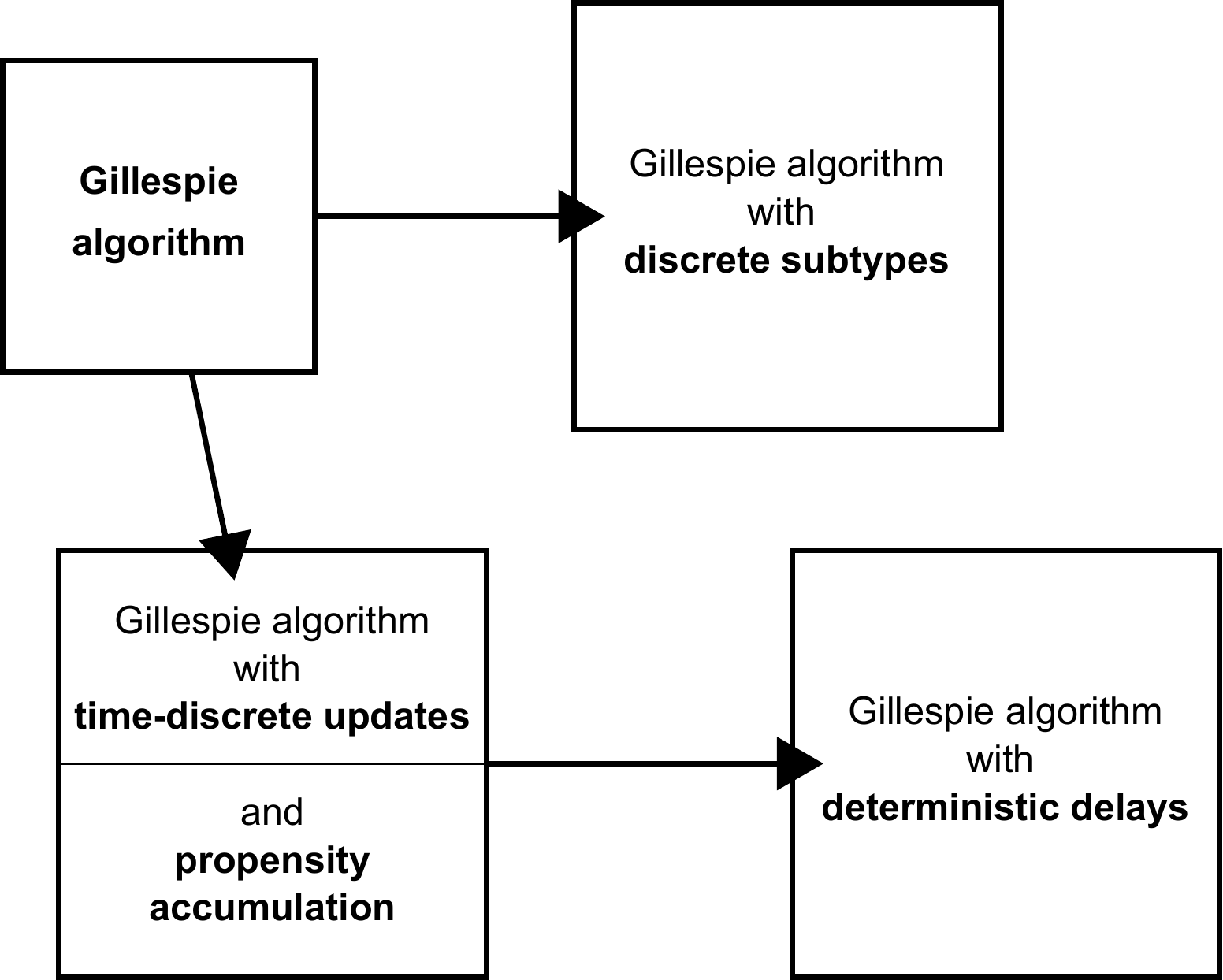}
    \vspace{2mm}
    \caption{Visualisation of the Gillespie algorithms and its generalisations}
    \label{fig:Gillespie_algorithms}
\end{figure}

The examples of epidemiological scenarios shown in this paper have the intention to show how improved expert systems in epidemiology might work. To make forecasts more precise, the suggestion is to compute a multitude of single time series in a hierarchical way. First of all, in contrast to differential equation approaches, we are considering stochastic processes in this paper. This means we have to compute a large set of realisations of the process. This requires more computational power, however it also gives a good knowledge how representative the expectation (associated with the differential equation) of the process is, for example by computing the variance of the multi-realisation ensemble. Then, with the help of the algorithms presented here, interventions can be tested, which again has to be computed for a multi-realisation ensemble. The computational complexity is immense, but unavoidable. The advantage is that predictions can be further evaluated, for example attaching a likelihood to the scenarios. This will give policy makers a better insight about what to expect when they plan certain policies.

\section*{Acknowledgements}
DA acknowledges funding provided by grants  PGC2018-096577-B-I00 and PID2021-127202NB-C21 from MCIN/AEI/10.13039/501100011033 and “ERDF A way of making Europe".
LMK acknowledges support from the the Warwick Research Development Fund through the project ‘Using Partial Differential Equations Techniques to Analyse Data-Rich Phenomena’, the European Union Horizon 2020 research and innovation programmes under the Marie Skłodowska-Curie grant agreement No. 777826 (NoMADS), the Cantab Capital Institute for the Mathematics of Information and Magdalene College, Cambridge (Nevile Research Fellowship).
MK acknowledges the support of all the members of the
\href{https://www.mathematical-modelling.science/index.php/projects/covid-19-epidemiology-project}{Covid-19 Epidemiology Project}.
Special thanks to Joan Saldaña for contributing references
and ideas.





\bibliography{references}


\end{document}